\mathchardef\lt="213C
\mathchardef\gt="213E
\newcommand{\eqdef}{\mathrel{\stackrel{\mathrm{def}}{=}}}
\newcommand{\coalgfg}{\mathbf{Coalg}_\mathsf{fg}}
\newcommand{\Set}{\mathbf{Set}}
\newcommand{\Conv}{\mathbf{Conv}}
\newcommand{\colim}{\varinjlim}
\newcommand{\coalg}{\mathbf{Coalg}}
\newcommand{\EM}{\mathbf{EM}}
\newcommand{\SMod}{\mathbf{SMod}}
\newcommand{\conv}{\textsf{conv}}
\newcommand{\aff}{\textsf{aff}}
\newcommand{\Se}{\mathcal{S}}
\newcommand{\M}{\mathcal{M}}
\newcommand{\s}{\mathbb{S}}
\newcommand{\N}{\mathbb{N}}
\newcommand{\R}{\mathbb{R}}
\newcommand{\T}{\mathcal{T}}
\newcommand{\two}{\mathbf{2}}
\newcommand{\D}{\mathcal{D}}
\newcommand{\eword}{\varepsilon}
\newcommand{\out}{\mathsf{out}}
\newcommand{\obs}{\mathsf{obs}}
\newcommand{\row}{\mathsf{row}}
\newcommand{\sem}[1]{[\![#1]\!]}
\title{A Coalgebraic Approach to Reducing Finitary Automata} 
\author{Keri D'Angelo}{Department of Computer Science, Cornell University, Ithaca, NY, USA}{}{}{}
\author{Alexandra Silva}{Department of Computer Science, Cornell University, Ithaca, NY, USA}{}{}{}%
\author{Gerco van Heerdt}{Droit Financial Technologies, London, UK}{}{}{}
\author{Leon Witzman}{Department of Computer Science, Cornell University, Ithaca, NY, USA}{}{}{}
\authorrunning{J. Open Access and J.\,R. Public} 
\keywords{coalgebras, finitary automata, state reduction} 
\begin{document}
\nolinenumbers
\maketitle

\begin{abstract}
Compact representations of automata are important for efficiency.  In this paper, we study methods to compute {\em reduced automata}, in which no two states accept the same language. We do this for   {\em finitary automata} (FA), an abstract definition that encompasses probabilistic and weighted automata. Our procedure makes use of Milius' locally finite fixpoint. We present a reduction algorithm that instantiates to probabilistic and $\s$-linear weighted automata (WA) for a large class of semirings. Moreover, we propose a potential connection between properness of a semiring and our provided reduction algorithm for WAs, paving the way for future work in connecting the reduction of automata to the properness of their associated coalgebras.

\end{abstract}

\section{Introduction}
\label{sec:intro}


Given a language $L$ it is natural to ask whether there exists a {\em small} automaton with a state that that accepts $L$. Whereas the notion of smallest is clear for deterministic automata, the same is not the case for other types of automata. A deterministic automaton is state-minimal iff all its states are reachable and no state is redundant; that is, no state accepts the same language as any other state. Hence, given a deterministic automaton we can minimize it by eliminating redundant states (and then take reachability from the initial state). Unfortunately, this property does not hold for more general automata: eliminating redundant states does not necessarily yield a minimal probabilistic or non-deterministic automaton, though it does yield a smaller automaton, which we call {\em reduced} and which is interesting to study as procedures to compute reduced automata are simpler than minimization procedures (see e.g. ~\cite{kiefer} for probabilistic automata). In particular, we will look at reduction procedures for {\em finitary automata}, an abstract definition that encompasses probabilistic automata (PA) and weighted automata (WA). 

Let $\T: \Set \to \Set$ be a finitary monad. Finitary automata are automata that have an output function $\out: S \to O$ and a transition function $\delta : S \to \T(S)^A$. If we assume that $O\cong \T(Y)$ for some finite set $Y$, we can lift the output and transition maps to operate on elements of $\T(S)$---$\out^\sharp : \T(S) \to O$ and $\delta^\sharp : \T(S) \to \T(S)^A$---and this is useful in defining the semantics of the automata, which we will detail later.  
Typically, automata also include an initial state. Here, we do not include the initial state in the definition and will study the semantics parametric on a state. Therefore, we simply write finitary automata as pairs $\mathcal A = (S,<\out,\delta>)$ (these are coalgebras for the functor $O\times \T(-)^A$). 

Finitary automata are a direct generalization of non-deterministic automata: the difference between them is the type of {\em next} state. In one case, it is a non-deterministic choice, whereas in the other it is a {\em combination}. By combination, we mean an element of $\T(X)$. These notions are instances of {\em side-effects} captured by a monad~\cite{moggi}. PAs and WAs are special cases by taking specific monads to capture these side-effects (see \cref{sec:prelim}). Besides PAs and WAs, some other types of automata that also have this abstract view include nominal and quantum automata.

In this paper, we focus on the notion of a {\em reduced automaton}. Given an automaton, we say it is {\em reduced} if no state is redundant (what we call reduced is called observable in \cite{airbib}). In the cases of PAs and WAs, the notion of a state being redundant is more subtle than for deterministic automata: a state is redundant if its language is a combination of the languages of other states; i.e. for PAs, a  {\em convex} combination, and for WAs, a {\em linear} combination. A deterministic automaton that is reduced will be minimal when eliminating unreachable states. This does not hold for finitary automata: it is possible that an automaton may be reduced, and yet once we eliminate unreachable states from a specific start state, we obtain an automaton that is not state minimal (see \cref{ex:redvsminimal}).


We study the problem of reducing finitary automata coalgebraically. As described earlier, finitary automata can be lifted from $\Set$ to $\EM(\T)$, where $\Set$ is the category of sets and set functions, and $\EM(\T)$ is the Eilenberg-Moore category over the monad $\T$. If there exists a notion of a {\em base} for all free and finitely generated elements of $\T(X)$ and there exists an algorithm to find such a minimal base, then a reduced automaton can always be calculated. We use the word base rather than basis since we do not assume the base is unique nor does every element need to be able to be written as a {\em unique} combination of the base. It becomes clear why this base condition is necessary in \cref{sec:wa}. 

This type of work has been done for deterministic automata which have state spaces in $\Set$.  To reduce deterministic automata, one can take the epi-mono factorization of the map from the state space $X$ into the carrier of the final coalgebra $\two^{A^*}$ in $\Set$, where $\two$ represents the two element set. This approach cannot be taken for $\EM(\T)$ categories in general, where quotients are not always well-understood. In this paper, we present a generalized algorithm for finitary automata in $\EM(\T)$, and explicitly describe the procedure for the special cases of probabilistic automata and $\s$-linear weighted automata. 

The coalgebraic perspective shows that reduced automata can be viewed as an image in a canonical object, $\vartheta F$, the locally finite fixpoint of a functor $F$ (\cref{def:lfp}): it can be constructed in an analogous fashion to the determinization of non-deterministic automata followed by a minimization procedure in a category with extra algebraic structure. The locally finite fixpoint is the colimit of the inclusion functor from the category of coalgebras with finitely generated carrier into the full coalgebra category, and under certain assumptions, the locally finite fixpoint is a subcoalgebra of the final coalgebra. We show finitary automata satisfy these assumptions. In the case of probabilistic automata, this extra structure gives rise to procedures in $\EM(\D$), the category of convex algebras and convex maps. In the case of $\s$-linear weighted automata, this extra structure gives rise to procedures in $\EM(\Se_\s$), the category of semimodules and semimodule homomorphisms (see \Cref{ex:monad} and \Cref{ex:algebras}).

While both reduced and state-minimal automata offer gains in space efficiency for representing languages, we argue in this paper that though state minimal automata are perhaps the most obvious small representation to study, there are advantages in considering reduced automata as compact acceptors of languages. In particular, it is known for probabilistic automata that minimization is NP-hard~\cite{kiefer} whereas the decision problem for reduced PA is in polynomial time (this is a consequence of \Cref{thm:pol}).

Finally, we begin forming a connection between properness and reducibility. Properness of semirings was first introduced in \cite{simequiv}, but has more recently been extended to describe functors \cite{fixpoint}. If a functor is proper, the rational fixpoint, which is similar to the locally finite fixpoint used in this paper, is a subcoalgebra of the final coalgebra. This allows for understanding the behavior of automata. 


In a nutshell, the main contributions (and structure) of the paper are as follows: 

\begin{enumerate}

\item In \Cref{sec:minimality}, we prove a concrete formula for $\vartheta F$ where $F$ is finitary over $\EM(\T)$ that enables its computation in $\Set$---this can be seen as one of the main advantages of the coalgebraic outlook on the problem offering guidance on how to generalize reduction to more types of automata;

\item A characterization of a reduced finitary automaton by viewing its image in the locally finite fixpoint $\vartheta F$ is shown in \Cref{sec:minimality};

\item Two special cases of this framework are PAs and $\s$-linear WAs (for a large class of semirings $\s$). In \Cref{sec:examples}, we give the complete construction of these procedures. For probabilistic automata, we provide a complete algorithm on how to explicitly compute the reduced automaton as well as a result (\Cref{thm:pol}) that implies the decision problem for reduced PA is in polynomial time.

\item We discuss directions for future work in \cref{sec:future}, including a detailed explanation on how our work connects to proper semirings/functors. 
\end{enumerate}

\section{Preliminaries}\label{sec:prelim}




We assume basic knowledge of category theory (functors and natural transformations). We first begin with some facts on monads and then coalgebras.

\begin{definition}[Monad]
A monad is a triple $(T, \eta, \mu)$ where $T \colon \mathbf {C} \to \mathbf {C}$ is a functor on a category $\mathbf {C}$, $\eta \colon X \to TX$ and $\mu \colon TTX \to TX$ are natural transformations (called the unit and multiplication, respectively) satisfying $\mu \circ \eta = id = \mu \circ T\eta$ and $\mu \circ \mu = \mu \circ T \mu$. 

\end{definition}

$(T, \eta, \mu)$ is a finitary monad if the functor $T$ commutes with filtered colimits.  

 \begin{example}[Monads]\label{ex:monad}
 We give examples of three monads on $\Set$. 
 \begin{enumerate}
 \item 
 The powerset monad is given by $\mathcal P(X) = \{ U \mid U \subseteq X\}$, $\eta(x) =\{x\}$, and $\mu (\Phi) = \bigcup_{S\in \Phi} S$. 
 \item 
 The (finitely supported) distribution monad is given by $\D(X) = \left\{ \phi \mid \sum \phi(x)=1 \right\}$  (from \cref{dist}), with natural transformations
 $$
 \eta(x) =\lambda y. \begin{cases} 1 & y=x\\ 0 & \text{otherwise}\end{cases}\qquad\qquad
 \mu (\Phi)(x) = \sum\limits_{\varphi\in \mathbf{supp}(\Phi)} \varphi(x) \times \Phi(\varphi).
 $$
 \item The (finitely supported) free semimodule monad for a semiring $\s$ (with unity) is given by $\Se_\s(X) = \{ \varphi : X \to \s \mid \sum_{i=1}^n \varphi(x_i) \cdot x_i \}$ (from \cref{semi}), with natural transformations
 $$ \eta(x)  =  \lambda y. \begin{cases} 1 & y=x\\ 0 & \text{otherwise}\end{cases}\qquad\qquad
 \mu (\Phi)(x) = \sum\limits_{\varphi\in \mathbf{supp}(\Phi)} \varphi(x) \times \Phi(\varphi).$$
 \end{enumerate}
 \end{example}

$\mathcal P$, $\D$, and $\Se_\s$ are all examples of finitary monads. 

 A {\em coalgebra} is a pair $(X, t\colon X \to \mathcal F X)$, where $X$ is the {\em state space} and $t$ is the transition dynamics which is parametric on a functor $\mathcal F \colon \mathbf{C} \to \mathbf{C}$. Coalgebras cover a range of automata types: deterministic automata are coalgebras for the functor $D(X) = 2\times X^A$ on $\Set$ (the category of sets and functions), weighted automata (over a field $\mathbb{F})$ are coalgebras for the functor $W(V) = \mathbb F \times V^A$ on $\mathbf{Vect}$ (the category of vector spaces and linear functions), and as we will see probabilistic automata can be seen as coalgebras in $\Conv$ (the category of convex sets and convex maps). The advantage of studying different automata as coalgebras is that many important notions are fully determined by $
 \mathcal F$, e.g homomorphisms, behavioral equivalence. More recently, there's been a research effort in showing that $\mathcal F$ is also rich enough to design abstract algorithms, including minimization algorithms~\cite{DBLP:conf/ifipTCS/KonigK14,milius-fm,DBLP:journals/lmcs/WissmannDMS19,DBLP:conf/tacas/BirkmannDM22}. An important concept in coalgebra is {\em finality}:  a coalgebra $(Z,z \colon Z\to \mathcal F Z)$ is {\em final} if for all 
 
  \begin{wrapfigure}{r}{.22\textwidth}\vspace{-30pt}
 \[
\hspace*{-.4cm} \xymatrix@R=.45cm@C=1.2cm{
X\ar[d]_f \ar@{-->}[r]|{\ \ \sem -\ \ }& Z\ar[d]^z\\
 \mathcal F X\ar@{-->}[r]&\mathcal F Z
 }
 \]\vspace{-25pt}
 \end{wrapfigure}
\noindent other coalgebras $(X,f \colon X\to \mathcal F X)$ there exists a unique structure-preserving homomorphism $\sem - \colon X \to Z$, as depicted on the right.

 Final coalgebras are an abstract way of capturing behavior: any state on another coalgebra can be mapped uniquely into it. Hence, $\sem x$ is a canonical representative of the behavior denoted by $x$. For concrete functors, this instantiates to familiar things: 
 \begin{example}[Final coalgebras]\label{ex:final}
 We give examples of final coalgebras in $\Set$ and $\mathbf{Vect}$. 
 \begin{enumerate}
 \item The final coalgebra of $D(X) = 2\times X^A$ on $\Set$ is the pair $(2^{A^*},  <\varepsilon?,\partial>)$, with the set of languages over $A$ as the carrier and the transition structure given by language derivatives: 
 \[
 \varepsilon?(L) = \begin{cases} 1 & \varepsilon\in L\\ 0 & \text{otherwise}\end{cases}
 \qquad \qquad \qquad
 \partial(L)(a)= \{w \mid aw \in L \}
 \]
 $\sem -$ assigns to a state $x$ of a deterministic automaton the regular language accepted by $x$. 
 \item The final coalgebra of $W(V) = \mathbb R \times V^A$ on $\mathbf{Vect}$ is the pair $({\mathbb R}<<A^*>>,  <o,t>)$, with the set of weighted languages (formal power series) over $A$ as the carrier and the transition structure given by the linear maps: $o(\sigma) = \sigma(\varepsilon)$ and $t(\sigma)(a)(w)= \sigma(aw)$. 
 $\sem -$ assigns to a state $x$ the rational power series denoting the weighted language accepted by $x$. 
 \end{enumerate}
 \end{example}
 The two examples of final coalgebras above are actually of similar nature (hence the closeness in their definitions). The second example is in the category of vector spaces and linear maps, which is an instance of a category of algebras for a monad, a concept we recall next (the first example is also an instance albeit for the simple identity monad!).

   \medskip
 
 Given a monad $T$, an algebra for $T$ (or $T$-algebra) is a pair $(X, h\colon TX \to X)$ where $h$, the algebra map, satisfies $h\circ \eta = id$ and $h \circ T h = h \circ \mu$. The category of algebras for a monad $T$, also called the category of {\em Eilenberg-Moore algebras for $T$}, denoted \textbf{EM}($T$), has $T$-algebras as objects and structure preserving maps morphisms. 

\begin{example}[Algebras for a Monad]\label{ex:algebras}
 We instantiate $T$-algebras for the monads of \cref{ex:monad}. 
 \begin{enumerate}
 \item $\EM(\mathcal P$) is the category of join-semilattices and join-preserving maps. 
  \item  $\EM(\mathcal D$) is the category of convex sets and convex maps.   
  \item $\EM(\Se_\s)$ is the category of semimodules over the semiring $\s$ and $\s$-linear maps.
   \end{enumerate}
 \end{example}
 
The second example above could also be presented with affine maps, as we have the following result from \cite{poly}: 
For a convex set $P$ and any $Q \subseteq E$ for $E$ a real coordinate space, $f$ preserves convex combinations iff $f$ preserves affine combinations.

 We can now show how the two examples in \cref{ex:final} are in fact an instance of the same result~\cite{DBLP:journals/jcss/Jacobs0S15}:
 
 \begin{theorem}\label{thm:final}
Let $T$ be a monad (in $\Set$) and $O$ be an algebra for $T$. The final coalgebra of the functor $M(X) = O \times X^A$, where $M\colon \mathbf{EM}(T) \to \mathbf{EM}(T)$, is the set of $O$-weighted languages $O^{A^*}$, is as follows:
\[
 \xymatrix@R=.45cm{
 X\ar[d]_f \ar@{-->}[r]^{\sem -}\ar@{}[dr]|*+<7pt>[o][F]{\star}& O^{A^*}\ar[d]^z_\cong\\
 O\times  X^A \ar@{-->}[r]&O\times (O^{A^*})^A
 } \qquad z(\varphi) = <\varphi(\varepsilon), \lambda a.\lambda w. \varphi(aw)>
\]

 \end{theorem}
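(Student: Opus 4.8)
The plan is to verify two things: (1) that the stated pair $(O^{A^*}, z)$ is a coalgebra in $\mathbf{EM}(T)$, i.e.\ that $O^{A^*}$ carries a canonical $T$-algebra structure and that $z$ is a homomorphism of $T$-algebras; and (2) that it is final, by constructing for an arbitrary $M$-coalgebra $(X, f = \langle \out, \delta \rangle)$ in $\mathbf{EM}(T)$ the unique map $\sem{-}\colon X \to O^{A^*}$ making square $\star$ commute. For (1), the key observation is that $\mathbf{EM}(T)$ has products computed as in $\Set$ and, more importantly, is closed under exponentiation by arbitrary sets: for any set $Y$ and $T$-algebra $(B, b)$, the set $B^Y$ carries the pointwise algebra structure $T(B^Y) \xrightarrow{\mathrm{st}} (TB)^Y \xrightarrow{b^Y} B^Y$ (using the strength of a $\Set$-monad), and this makes $B^Y$ the cotensor of $B$ by $Y$. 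Applying this with $Y = A^*$ and $B = O$ gives the algebra structure on $O^{A^*}$; since $z$ is built from evaluation maps it is automatically a $T$-algebra morphism, and it is an isomorphism because $O \times (O^{A^*})^A \cong O^{A^*}$ as sets via the isomorphism $A^* \cong 1 + A \times A^*$, and this set isomorphism respects the algebra structures.

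For (2), I would define $\sem{x}$ by the familiar formula from the $\Set$ case: $\sem{x}(w) = \out_w(\ldots \delta(x)(a_1)\ldots)$, more precisely $\sem{x}(\varepsilon) = \out(x)$ and $\sem{x}(aw) = \sem{\delta(x)(a)}(w)$ — note this is well-typed because $\delta(x)(a) \in X$ already, as we are working in $\mathbf{EM}(T)$ where the state space is the algebra. One then checks commutativity of $\star$ by unfolding $z$: the first component of $z(\sem{x})$ is $\sem{x}(\varepsilon) = \out(x)$, matching the $\out$ component of $f$, and the second component is $\lambda a.\lambda w.\sem{x}(aw) = \lambda a.\sem{\delta(x)(a)}$, matching $\delta$ composed with $(\sem{-})^A$. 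Uniqueness follows by induction on word length: any homomorphism $g$ into $(O^{A^*}, z)$ must satisfy $g(x)(\varepsilon) = \out(x)$ and $g(x)(aw) = g(\delta(x)(a))(w)$ from the commuting square, which determines $g$ uniquely; this is exactly the argument that already proves finality in $\Set$.

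The main subtlety — and the only place where the $\mathbf{EM}(T)$ setting differs from plain $\Set$ — is verifying that $\sem{-}$ is a $T$-algebra homomorphism, not merely a function. This amounts to checking that $\sem{-}$ commutes with the algebra structures, i.e.\ $\sem{-} \circ a_X = a_{O^{A^*}} \circ T(\sem{-})$ where $a_X$ is the algebra structure on $X$. One reduces this, pointwise in $w \in A^*$, to the statement that $\out$ and $\delta$ are themselves algebra morphisms (which holds because $f$ is a morphism in $\mathbf{EM}(T)$ by hypothesis) together with naturality of the strength; I would prove it by induction on $|w|$, with the base case $w = \varepsilon$ using that $\out$ is a morphism and the inductive step using that $\delta$ is a morphism and the induction hypothesis applied to $\delta(-)(a)$. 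Since this is essentially bookkeeping with the strength and the known result \cite{DBLP:journals/jcss/Jacobs0S15}, I would either cite that reference for the detailed verification or relegate the strength calculations to a remark, presenting the set-level formula and the commuting square as the substance of the proof.
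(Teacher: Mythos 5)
Your proof is correct, but note that the paper does not prove this statement at all: it simply cites \cite{DBLP:journals/jcss/Jacobs0S15}, where the result is an instance of the general determinization/lifting framework. There the argument is structural rather than computational: since $M(X)=O\times X^A$ is a lifting of the corresponding $\Set$-functor, the final $\Set$-coalgebra $(O^{A^*},z)$ acquires a $T$-algebra structure by finality (inducing $T(O^{A^*})\to O^{A^*}$ from a coalgebra structure on $T(O^{A^*})$), and the fact that the unique map $\sem{-}$ is an algebra homomorphism is again extracted from uniqueness of final maps out of $TX$, with no induction on words. Your route instead verifies everything by hand: you equip $O^{A^*}$ with the pointwise algebra structure (as the $A^*$-fold power of $O$ in $\EM(T)$, which exists because the forgetful functor creates limits), observe that $f$ being an $\EM(T)$-morphism makes $\out$ and each $\delta(-)(a)$ algebra morphisms, and then prove by induction on $|w|$, with the element of $TX$ universally quantified, that $\sem{-}$ commutes with the algebra structures; this induction does go through exactly as you sketch, and uniqueness at the $\Set$ level already gives uniqueness in $\EM(T)$. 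One small terminological slip: the map $T(B^Y)\to(TB)^Y$ you use is the comparison map coming from $B^Y$ being a $Y$-indexed product (i.e.\ $\langle T(\mathrm{ev}_y)\rangle_{y\in Y}$), not the tensorial strength $TX\times Y\to T(X\times Y)$; the structure you define is the correct product-algebra structure, so nothing breaks. The trade-off is the expected one: your proof is elementary and self-contained and exhibits the concrete formula for $\sem{-}$, whereas the cited argument is shorter given the machinery and applies uniformly to any lifted functor with a final $\Set$-coalgebra, not just $O\times X^A$.
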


\section{Characterizing Reduced Finitary Automata}\label{sec:minimality}

The type of automata we study in this paper are what we call finitary automata. Finitary automata use a finitary monad in their transition structure. For the remainder of this paper, we assume the monad $\T$ to denote a finitary monad on $\Set$.
        
\begin{definition}[Finitary Automata (FA)]
A finitary automaton is a pair $(Q, <\delta, \out>)$, consisting of a set of states $Q$, a transition function $\delta \colon Q \to \T(Q)^A$, and an output function $\out \colon Q \to O$, where $\T : \Set \to \Set$ is a finitary monad, and $O\cong\T(L)$ for some set $L$.
\end{definition}

\begin{remark}\label{remark:base}
Later, we will further assume the monad $\T$ satisfies a {\em base condition}, which we define to mean that for all free and finitely generated carriers in $\EM(\T)$, that is, the carrier is of the form $\T(X)$ for some finite set $X$, there exists an algorithm to find a {\em base} (a minimal generating set, not necessarily unique) $\overline X \subseteq X$ such that $\T(\overline X) \cong \T(X)$. We will see an example in \cref{sec:wa} why this base condition is necessary.
\end{remark}

Finitary automata are coalgebras for the $\Set$ functor $O\times\T(-)^A$. In order to understand how finitary automata work as acceptors of languages we use a coalgebraic generalization of the usual subset construction for non-deterministic automata~\cite{DBLP:journals/jcss/Jacobs0S15} in the following way:
 \[
 \xymatrix@R=.5cm{
 X \ar[d]_{<\out,\delta>}\ar[r]^{\eta_X}& \T (X)\ar[dl]^-{<\out^\sharp,\delta^\sharp>}\ar@{}[dr]|*+<7pt>[o][F]{\star} \ar@{-->}[r]^{\sem -}& O^{A^*}\ar[d]_\cong\\
 O \times \T(X)^A \ar@{-->}[rr] && O\times  {(O^{A^*})}^A \
 }
\]
Here, $\eta$ is the monad unit and, because $O\cong T(L)$, the coalgebra $<\out, \delta> \colon X \to O \times \T(X)^A$ in $\mathbf{Set}$ can be lifted to a coalgebra $<\out^\sharp, \delta^\sharp> \colon \T(X) \to O \times \T(X)^A$ in $\mathbf{EM}(\T)$ (see e.g. \cite{DBLP:journals/jcss/Jacobs0S15}). 

The commutativity of $\xymatrix{*+<7pt>[o][F]{\star}}$ comes from \Cref{thm:final} and therefore $\sem -$ is actually a homorphism in $\mathbf{EM}(\T)$. For intuition, if we replace $\T$ by the powerset monad in the above diagram (with $\eta(x)=\{x\}$) and $O$ by $\mathbf{2}$, the two-element set, the above automata would coincide with non-deterministic automata (NFA) and both the final state and transition functions would correspond to the well-known subset construction: e.g. $\out^\sharp(U) = 1 \iff \exists u\in U.\ \out (u) =1$. The map $\sem{-}$ would assign to state $\{x\}$ the language it accepts (as defined for an NFA).

We will often use $L_x$ to denote the language accepted by a state, which is $\sem{\eta(x)}$.

Minimal deterministic automata have a universal property: they are unique (up-to isomorphism) for a given language. In the cases of non-deterministic and probabilistic automata, there are non-isomorphic state minimal automata accepting the same (probabilistic) language. For this reason, algorithms to find minimal probabilistic automata are hard(er) to design. Our idea for developing an approach to reduction of automata stemmed from wanting a more efficient way to reduce the size of the state space. We will take a coalgebraic perspective which allows us to generalize reduction beyond just these two types of automata, namely, to finitary automata. The goal of this section is to show an alternative to minimal automata, which we will call {\em reduced automata}, and show reduced automata have a universal property similar to minimal deterministic automata.

\begin{definition}[Redundant State]
Let $X$ be the set of states. A state $x \in X$ is considered redundant if there exists some $y\in \T(X \backslash \{x \})$ such that $x$ and $y$ accept the same language; that is, $\sem{\eta(x) }= \sem y$.
\end{definition}

\begin{definition}[Reduced automata]
 An \emph{automaton} $\mathcal A = (Q,<\out,\delta>)$ is {\em reduced} if there are no redundant states.
 \end{definition}
 
  For example, a \emph{probabilistic automaton} $\mathcal A = (Q, <\delta, \out>)$ is {\em reduced} if there is no $q\in Q$ such that, for all $w\in A^*$
  \[
  L_q (w) = \sum_{s\in Q\setminus \{q\}} r_s \times L_s(w) \qquad\qquad \text{ and } \sum_{s\in Q\setminus \{q\}} r_s = 1 , \quad r_s \in [0,1]
  \]
That is, $L_q$ is not a convex combination of languages of other states in $Q$.


\begin{example}[Minimal vs reduced]\label{ex:redvsminimal}
Clearly, minimal automata are reduced but not the other way around. We give an example of a PA over the alphabet $A=\{a\}$. We depict the transitions and output functions as a state diagram---missing outputs are $0$. 

    \begin{tikzpicture}[node distance = 1.75cm, arrows=->,font=\scriptsize]
    \tikzset{every state/.style={minimum size=0pt}}
        \node[state] (q1) {$q_1$};
        \node[state, right of=q1, accepting below, accepting text = {$1/2$}] (q2) {$q_2$};
        \node[state, right of=q2, accepting below, accepting text = {$1/2$}] (q3) {$q_3$};
        \node[state, right of=q3] (q4) {$q_4$};
        \node[state, right of=q4, accepting below, accepting text = {$1$}] (q5) {$q_5$};
        \node[state, right of=q5, accepting below, accepting text = {$1/4$}] (q6) {$q_6$};
        \draw (q1) edge[above] node[above]{$a,1$} (q2)
              (q2) edge[loop right] node[above]{$a,1$} (q2)
              (q3) edge[below] node[above]{$a,1$} (q4)
              (q4) edge[loop right] node[above]{$a,1$} (q4)
              (q5) edge[below] node[above]{$a,1$}(q6)
              (q6) edge[loop right] node[above]{$a,1$} (q6);
    \end{tikzpicture}

This automaton is not reduced. The language of state $q_6$ can be written using the languages of $q_2$ and $q_4$ (or $q_1$ and $q_3$): $L_{q_6} = 0.5 L_{q_2} + 0.5 L_{q_4}$. Eliminating state $q_6$ would result in a 5 state automaton that is reduced---note how the choice of writing $L_{q_6}$ as a convex combination results in two different transition structures; this illustrates that reduced automata are not unique. The reduced automaton is not minimal: for example if we fix $q_5$ as the initial state, we can see that there is an automaton with $2$ states accepting $L_{q_5}$. 
\end{example}

In the next sub-section we will provide a characterization of the reduced automaton as the image of a unique map in an Eilenberg-Moore category, and this will show a way to develop algorithms to compute this reduced automaton. 

In \cref{sec:lfp} and \cref{sec:fp_red} we show the following: first, we compute the image of a finitary coalgebra $(\T(X), \beta)$ in the final coalgebra. This utilizes a certain colimit, called the locally finite fixpoint (\cref{def:lfp}). We then show if there exists an algorithm to compute a minimal base for $\T(X)$, we can reduce the automata. This algorithm will come into play in the choice of the maps in the colimit we compute (\Cref{eq:colim}). This will relate the state space of the original automaton $\T(X)$ and the reduced automaton $\T(\overline X)$ in the final coalgebra. Due to some of the nice properties we have that are discussed below, we are able to compute the colimit in $\Set$ and then revert back to $\EM(\T)$.

  \subsection{Reduced automata as subcoalgebras of the locally finite fixpoint}\label{sec:reducedlfp}

Our goal here is to ``factor'' the morphism 
\begin{equation}\label{eq:factor}
\sem - = \xymatrix{ \T(X) \ar@{->>}[r]^-e& I \ar@{>->}[r]^-i& O^{A^*}} 
\end{equation}
where this factorization extends to the coalgebra structure so $I$ can be given an automaton structure. Because $i$ is a mono, we can see that $I$ does not contain two states that can be mapped to the same language. 

We first solve the following questions: When and how can we effectively compute $I$ and its transition structure? Can such an $I$ be represented as an automaton in $\Set$ rather than in $\EM(\T)$: in other words, when does there exist a set $Y$ such that $I\cong \T(Y)$? This representation in $\Set$ will be exactly the reduced automaton as the language of a state in $Y$ will not be a combination of other states, since $i$ is monic.

\paragraph*{Computing $I$: Epi-mono factorization in $\EM(\T)$.}

Deterministic automata have state spaces in $\Set$. Thus, when looking at language equivalence of two states, one can easily compute the epi-mono factorization of the final map $\sem - \colon Q \to \two^{A^*}$. This is because quotients are well-understood in $\Set$. When viewing $\EM(\T)$ in general, this same notion of quotient does not translate since $\EM(\T)$ may be neither abelian nor additive. For example, for $\D$ the finitely supported distribution monad (\cref{dist}), $\EM(\D)$ is neither abelian nor additive. Thus actually computing the epi-mono factorization in \cref{eq:factor} is no longer as straightforward as taking a quotient, since the notion of a quotient is not yet fully understood in $\EM(\T)$ (see e.g \cite{Gubeladze2019AffinecompactF}). This lack of clarity on how to obtain $I$ from computing $e$ in \cref{eq:factor} led us to take a different perspective, made possible by the coalgebraic outlook. We will compute $I$ using the fact that $I$ must be a subcoalgebra of the final coalgebra. Because all $\T(X)$ are finitely generated state spaces, we can look at its image in the {\em locally-finite fixpoint} (\cref{def:lfp}) of a finitary functor $F$ over $\EM(\T)$.  This gets us a kind of epi-mono factorization into the final coalgebra. We will study this fixpoint below and we will show how the explicit computation of the fixpoint can be used to eliminate the redundant states.

\subsection{Locally finite fixpoint}\label{sec:lfp}

We briefly recall some facts on the the locally finite fixpoint of a functor.  Below, it is assumed $\mathcal A$ is a locally finitely presentable (lfp) category. The goal is to calculate and use this fixpoint in determining reduction of finitary automata.

\begin{definition}[Locally finite FP \cite{fixpoint}]\label{def:lfp}

The {\em locally finite fixpoint} of a functor $F\colon \mathcal A \to \mathcal A$ is given as the colimit $(\vartheta F, \ell)=\colim(\coalgfg (F) \xhookrightarrow{i} \mathbf{Coalg} (F))$, where $i$ is the inclusion functor and $\coalgfg(F)$ consists of all $F$-coalgebras with finitely generated carriers.

\end{definition}


\begin{theorem}[\cite{fixpoint}]\label{thm:fixpoint}

Suppose that the finitary functor $F : \cal{A} \to \cal{A}$ preserves monos. Then $(\vartheta F, \ell)$ is a fixpoint for $F$, and it is a subcoalgebra of $v F$ ($v F$ denotes the final $F$-coalgebra).

\end{theorem}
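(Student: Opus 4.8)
The statement has two parts: that the canonical coalgebra morphism $m\colon\vartheta F\to vF$ into the final coalgebra $(vF,\tau)$ is monic (so that $\vartheta F$ is a subcoalgebra of $vF$), and that the structure map $\ell\colon\vartheta F\to F(\vartheta F)$ is invertible (so that $\vartheta F$ is a fixpoint). I would prove them in the following order. First I would establish that $\coalgfg(F)$ is essentially small and \emph{filtered}, which makes the colimit defining $(\vartheta F,\ell)$ exist and be computed on carriers. Then I would use that $F$ preserves monos to identify $\vartheta F$ with the directed union of the finitely generated subcoalgebras of $vF$, which both produces $m$ and yields the subcoalgebra claim. Finally I would read the fixpoint property off this description.

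For the first step: the forgetful functor $U\colon\coalg(F)\to\mathcal A$ creates all colimits (a colimit of carriers acquires a unique compatible $F$-coalgebra structure), and in a locally finitely presentable category the finitely generated objects form an essentially small full subcategory that is closed under finite colimits. Hence $\coalgfg(F)$ has all finite colimits --- computed on carriers --- and a category with finite colimits is filtered; it is also essentially small. Since $F$ is finitary it preserves the filtered colimit $\colim U|_{\coalgfg(F)}$, so the colimit in $\coalg(F)$ of the inclusion $\coalgfg(F)\hookrightarrow\coalg(F)$ exists, is created by $U$, and is $(\vartheta F,\ell)$, with $\ell$ the map induced by the cocone $\bigl(A\xrightarrow{a}FA\xrightarrow{F\,\mathrm{inj}_A}F(\vartheta F)\bigr)_{(A,a)}$.

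For the second step: being lfp, $\mathcal A$ has (strong epi, mono)-factorizations, and since $F$ preserves monos a diagonal fill-in turns the factorization of any coalgebra morphism into a factorization in $\coalg(F)$; applied to the unique morphism from a finitely generated coalgebra $(A,a)$ into $(vF,\tau)$, this shows the morphism factors through a finitely generated subcoalgebra of $vF$. The poset of finitely generated subcoalgebras of $vF$ is directed (a finite join is the image of a coproduct, hence finitely generated and again a subcoalgebra), and its inclusion into $\coalgfg(F)$ is a final functor, because for every finitely generated coalgebra the relevant comma category contains, and is connected through, the image of that coalgebra in $vF$. Therefore $\vartheta F$ is the colimit of this directed diagram of subobjects of $vF$; since in an lfp category filtered colimits of monomorphisms are monomorphisms, the induced $m\colon\vartheta F\to vF$ is monic.

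For the third step: $m$ is a coalgebra morphism into $(vF,\tau)$ with $\tau$ invertible by Lambek's lemma, so $Fm\circ\ell=\tau\circ m$; the right-hand side is monic and $F$ preserves monos, hence $\ell$ is monic, and from $m=(\tau^{-1}\circ Fm)\circ\ell$ we see that, as subobjects of $vF$, $\vartheta F$ lies below $\tau^{-1}\circ Fm\colon F(\vartheta F)\to vF$. It remains to see the reverse inclusion $F(\vartheta F)\le\vartheta F$ in the subobject lattice of $vF$. Writing $\vartheta F$ as the directed union of its finitely generated subcoalgebras $A'\hookrightarrow vF$ and using finitariness to write $F(\vartheta F)$ as the directed union of the $FA'$, one reduces to the claim that for each such $A'$ and each finitely generated subobject $S$ of the pullback $\tau^{-1}(FA')\hookrightarrow vF$ the join $S\vee A'$ is a finitely generated subcoalgebra of $vF$: it is finitely generated because $S$ and $A'$ are, and it is a subcoalgebra because $\tau$ sends it into $FA'\subseteq F(S\vee A')$, so closure under the dynamics is attained in a single step. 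Hence $S\vee A'\le\vartheta F$, so $\tau^{-1}(FA')\le\vartheta F$, so $F(\vartheta F)\le\vartheta F$, and the two inclusions give that $\ell$ is an isomorphism. I expect this last containment to be the crux: the one-step-closure observation is transparent when $\mathcal A=\Set$ and elements are available, but carrying it out in a general lfp category --- generating subcoalgebras without points, interchanging the pullback $\tau^{-1}(-)$ with directed unions, and verifying that $m$, $\ell$, and $\tau$ cohere as used --- takes some care. By contrast, the filteredness of $\coalgfg(F)$ and the transport of factorizations are routine once it is known that $U$ creates colimits and $F$ preserves monos.
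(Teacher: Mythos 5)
The paper does not actually prove this statement: it is imported verbatim from \cite{fixpoint}, so there is no in-paper argument to compare yours against line by line. Judged on its own terms, your sketch is essentially sound, and its first two stages coincide with the strategy of the cited reference: filteredness of $\coalgfg(F)$ via closure of finitely generated objects under finite colimits and creation of colimits by the forgetful functor, and the identification of $\vartheta F$ with the directed union of the finitely generated subcoalgebras of the final coalgebra, using mono preservation to lift (strong epi, mono)-factorizations of the final morphisms to $\coalg(F)$ and a cofinality argument. Where you genuinely diverge is the fixpoint property: the reference establishes it abstractly, by showing $\vartheta F$ is the terminal locally finitely generated coalgebra and that $(F(\vartheta F), F\ell)$ is again locally finitely generated, so that terminality furnishes the inverse to $\ell$ --- an argument that does not pass through the final coalgebra at all (and hence also covers settings, like the rational fixpoint, where the embedding may fail). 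Your route instead derives the fixpoint property from the embedding, by the mutual inclusion $\vartheta F\le F(\vartheta F)\le\vartheta F$ in the subobject lattice of $\nu F$, with the ``one-step closure'' observation that $S\vee A'$ is a finitely generated subcoalgebra doing the real work; this is more concrete and correct under the stated hypotheses (the diagonal fill-ins against $FA'\rightarrowtail F(\nu F)$, the fact that strong quotients of finitely generated objects are finitely generated, and the harmless interchange of $\tau^{-1}(-)$ with directed unions since $\tau$ is invertible by Lambek all go through in an lfp category), but it is strictly less general than the argument in \cite{fixpoint}, and you correctly flag the points where element-free care is needed.
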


To use the preceding theorem, we first show the coalgebra we are working with is a finitary functor.

\begin{theorem}[Lifting to $\EM(\T)$]\label{thm:fin}

$F: \EM(\T) \to \EM(\T)$ defined by $F(Q) = O \times Q^A$ is a lifting of a functor $F_0 : \Set \to \Set$ defined by $F_0(X)=O \times X^A$.
 
 \end{theorem}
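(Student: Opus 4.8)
The plan is to read ``$F$ is a lifting of $F_0$'' in the usual sense: writing $U\colon \EM(\T)\to\Set$ for the forgetful functor, I want to exhibit $F$ as a genuine endofunctor on $\EM(\T)$ and then check that $U\circ F = F_0\circ U$ strictly, on objects and on morphisms. The only point that needs justification is that the expression $O\times Q^A$ carries a canonical $\T$-algebra structure whenever $Q$ does; everything else is bookkeeping. The two facts I will invoke are: (i) $\EM(\T)$ is complete and $U$ creates (hence preserves) all limits; and (ii) $O\cong\T(L)$ carries the free $\T$-algebra structure $(O,\mu_L)$, so $O$ is an object of $\EM(\T)$.

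\emph{Definition of $F$ on objects and morphisms.} Given $(Q,h)\in\EM(\T)$, I define $F(Q,h)$ to be the product, taken in $\EM(\T)$, of the free algebra $(O,\mu_L)$ with the $A$-indexed power $(Q,h)^A=\prod_{a\in A}(Q,h)$. By fact~(i) this product exists and its underlying set is $O\times Q^A$, with structure map the unique $\T$-algebra structure rendering the projections homomorphisms; concretely it is $\langle\,\mu_L\circ\T\pi_O,\ \lambda a.\,h\circ\T(\mathrm{ev}_a)\,\rangle$, where $\pi_O\colon O\times Q^A\to O$ and $\mathrm{ev}_a\colon O\times Q^A\to Q$ are the evident projections. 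On a homomorphism $f\colon(Q,h)\to(Q',h')$ I set $F(f)=\id_O\times f^A$; this is a homomorphism $F(Q,h)\to F(Q',h')$ either by a direct check against the two structure maps or, more cleanly, because $f^A$ is a homomorphism (powers of homomorphisms are homomorphisms) and a product of homomorphisms in $\EM(\T)$ is one. Functoriality of $F$ (preservation of identities and composition) is then immediate from functoriality of products and of $(-)^A$.

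\emph{The lifting equation.} On objects, $U(F(Q,h))=O\times Q^A=F_0(Q)=F_0(U(Q,h))$, using $U(Q,h)=Q$; on morphisms, $U(F(f))=\id_O\times f^A=F_0(f)=F_0(U(f))$. Both equalities hold on the nose, so $U\circ F=F_0\circ U$, i.e.\ $F$ is a lifting of $F_0$. (Equivalently, one could exhibit the corresponding distributive law $\lambda\colon\T F_0\nto F_0\T$, assembled from $\mu_L$ in the $O$-coordinate and the canonical map $\T(X^A)\to(\T X)^A$ in the $X^A$-coordinate, and verify its two axioms; the route via completeness of $\EM(\T)$ simply sidesteps that calculation.)

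\emph{Main obstacle, and a remark for later use.} There is no real obstacle here: the single non-formal ingredient is fact~(i), a standard property of Eilenberg--Moore categories, so the argument is essentially a verification. For the developments in \cref{sec:lfp} it is worth recording two consequences of the lifting that are needed to apply \cref{thm:fixpoint}: since the alphabet $A$ is finite and $\T$ is finitary, $F_0$ is a finitary functor on $\Set$ (a finite product of the constant functor $O$ with finitely many copies of the identity), and because $U$ preserves and creates filtered colimits for a finitary monad, $F$ is finitary as well; moreover $F$ preserves monos, since monos in $\EM(\T)$ are precisely the injective homomorphisms and $F_0$ evidently preserves injections.
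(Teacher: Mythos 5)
Your proposal is correct and follows essentially the same route as the paper: both amount to checking the lifting equation $U\circ F = F_0\circ U$ on the nose, which the paper does in two lines on objects only. The extra material you supply---realising $F(Q,h)$ as the product of the free algebra $(O,\mu_L)$ with the power $(Q,h)^A$ in $\EM(\T)$ (so that $O\times Q^A$ genuinely carries an algebra structure), checking morphisms, and noting mono-preservation and finitariness---is detail the paper leaves implicit or defers to the surrounding discussion, and it is all accurate.
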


 \begin{proof}
 By lifting of a functor, it is meant that $F_0 \circ U = U \circ F$, where $U: \EM(\T) \to \Set$ is the forgetful functor. $F_0 \circ U (Q, \alpha) = F_0 (Q) = O \times Q^A$. In the other direction, $U \circ F(Q, \alpha) = O \times Q^A$ in $\Set$ since we apply $F$ to the algebra $(Q, \alpha)$ and then the forgetful functor removes the algebra structure.
 
 \end{proof}
 
 \begin{theorem}\label{thm:finitary}
 
 $F_0$ is a finitary functor.
 
 \end{theorem}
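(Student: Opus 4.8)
The plan is to show that $F_0 \colon \Set \to \Set$ with $F_0(X) = O \times X^A$ preserves filtered colimits, which is the definition of a finitary functor. I will reduce this to two standard facts: that the covariant powering functor $(-)^A$ is finitary whenever $A$ is a finite set, and that finitary functors are closed under products with constant functors (and, more generally, under finite products).

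First I would recall that in $\Set$ a functor is finitary precisely when it preserves filtered colimits, and that it suffices to check preservation of colimits of $\omega$-chains together with directed unions, or—more conceptually—to note that $\Set$ is locally finitely presentable with the finite sets as the finitely presentable objects, so a functor is finitary iff it is the left Kan extension of its restriction to finite sets. The key step is to handle the exponent functor $E_A := (-)^A$. Here I would use that $A$ is \emph{finite}: for a filtered diagram $(X_j)_{j \in J}$ with colimit $X = \colim_j X_j$ and colimit injections $\kappa_j \colon X_j \to X$, I must show the canonical comparison map $\colim_j (X_j^A) \to (\colim_j X_j)^A$ is a bijection. Surjectivity: given $f \colon A \to X$, each of the finitely many values $f(a)$ factors through some $X_{j_a}$ by the colimit description of $\Set$; by filteredness of $J$ there is a common upper bound $j$ dominating all the $j_a$, so $f$ factors through $X_j^A$. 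Injectivity: if two tuples in $X_{j}^A$ and $X_{k}^A$ become equal in $X^A$, then pointwise on each of the finitely many arguments they are identified in the colimit, so each pointwise equality is witnessed by some morphism in $J$; again filteredness gives a single $\ell$ above $j$, $k$ and all these witnesses, at which the two tuples already agree. Hence $E_A$ is finitary. (This is exactly the place where finiteness of $A$ is used — the argument fails for infinite $A$.)

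Next I would observe that the constant functor $K_O$ at $O$ is finitary, since for a filtered (hence nonempty and connected) diagram the colimit of the constant diagram on $O$ is $O$ itself. Then $F_0 = K_O \times E_A$ is a product of two finitary functors, and finite products of finitary functors on $\Set$ are again finitary: filtered colimits commute with finite limits in $\Set$ (finite limits are finitely presentable-indexed limits, and $\Set$ is lfp), so the canonical map $\colim_j (K_O(X_j) \times E_A(X_j)) \to \colim_j K_O(X_j) \times \colim_j E_A(X_j)$ is an isomorphism, and composing with the isomorphisms from the previous paragraph gives $\colim_j F_0(X_j) \cong F_0(\colim_j X_j)$ naturally. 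Therefore $F_0$ preserves filtered colimits and is finitary.

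The only genuine obstacle is the exponent step, and it is mild: one must be careful that the comparison map is built from the universal property and that both its surjectivity and injectivity genuinely consume the filteredness of $J$ together with the finiteness of $A$ (a single object of $J$ need not dominate all the finitely many witnesses, but a filtered $J$ produces one). Everything else — the constant functor being finitary, and finite products of finitary endofunctors on an lfp category being finitary — is routine bookkeeping with the commutation of filtered colimits and finite limits in $\Set$. If one prefers an even more elementary route, one can instead verify directly that $F_0$ preserves $\omega$-colimits and directed unions of subsets by the same pointwise-and-common-upper-bound argument, which avoids invoking lfp machinery altogether.
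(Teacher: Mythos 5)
Your proposal is correct and follows essentially the same route as the paper: decompose $F_0$ as the product of the constant functor at $O$ with $(-)^A$, use finiteness of $A$ (i.e.\ $A$ being finitely presentable, which you verify by the explicit common-upper-bound argument rather than citing it) to see that $(-)^A$ preserves filtered colimits, and conclude via the commutation of filtered colimits with finite limits in $\Set$. Your write-up merely fills in the element-level details that the paper leaves implicit.
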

 
\begin{proof}
The functor $F_0$ can be rewritten as $O \times (A,-)$. We want to show it preserves filtered colimits. $A$ is a finite alphabet and therefore finitely presented in $\Set$ so $(A,-)$ commutes with filtered colimits. Moreover, filtered colimits commute with limits in $\Set$, so we have $F_0$ preserves filtered colimits.
\end{proof}

Therefore, $F$ is a finitary functor \cite[Remark 3.2]{fixpoint}.

Now it can be seen that the category of $F$-coalgebras, $\coalg(F)$, we are considering in this paper satisfy all necessary assumptions. All categories considered here are of the form $\EM(\T)$ for a finitary monad $\T: \Set \to \Set$. Eilenberg-Moore categories over a finitary monad on $\Set$ are algebraic categories, and thus lfp. The assumption of $F$ preserving monos is not necessary since we showed in \cref{thm:fin} that $F$ is a lifting of a set functor \cite{fixpoint}. Thus, $\vartheta F$ for the finitary functor $F(Q)=O \times Q^A$ over $\EM(\T)$ is a subcoalgebra of the final coalgebra. 
To use this, we first calculate the colimit from \cref{def:lfp}: $(\vartheta F, \ell) = \colim(\coalgfg(F) \xhookrightarrow{i} \coalg(F))$. 
We use the fact that the forgetful functor $U\colon \coalg(F) \to \EM(\T)$ creates colimits (this holds for any forgetful functor $U\colon \coalg(F) \to \mathcal A$ \cite{milius}). This gives the following composition of morphisms
$$\coalgfg(F) \xhookrightarrow{i} \coalg(F) \xrightarrow{U} \EM(\T)$$
allowing the colimit to be computed in $\EM(\T)$. We want to apply a similar result for the forgetful functor $\bar U : \EM(\T)\to \Set$ to be able take the colimit in $\Set$, like as done for DFAs, which would lead to the following chain of morphisms:

$$\coalgfg(F) \xhookrightarrow{i} \coalg(F) \xrightarrow{U} \EM({\T}) \xrightarrow{\overline{U}} \Set$$

The forgetful functor $\bar U : \EM(\T)\to \Set$ creates colimits that are preserved by $\T$, and $\T$ preserves filtered colimits since it is finitary. Using the following result from \cite{effect} allows us to compute $\vartheta F$ in $\Set$.

\begin{theorem}[\cite{effect}]\label{thm:filtered}

$\vartheta F$ is a filtered colimit.

\end{theorem}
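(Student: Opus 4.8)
The plan is to establish that the diagram defining $\vartheta F$ --- the inclusion $\coalgfg(F) \xhookrightarrow{i} \coalg(F)$ of \Cref{def:lfp} --- is a \emph{filtered} diagram; equivalently, that the (essentially small) category $\coalgfg(F)$ is filtered. Granting this, $\vartheta F = \colim(i)$ is by definition a filtered colimit, which is exactly the hypothesis needed to push the computation down to $\Set$: as already recorded, $\bar U \colon \EM(\T) \to \Set$ creates colimits preserved by $\T$, and $\T$, being finitary, preserves filtered ones.

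First I would reduce ``filtered'' to ``has finite colimits'': an essentially small category is filtered as soon as it has all finite colimits, since it is then nonempty via its initial object, any two objects admit a cocone via their coproduct, and any parallel pair admits a cocone via its coequalizer. So it suffices to prove that $\coalgfg(F)$ has finite colimits. Since $\T$ is a finitary monad on $\Set$, $\EM(\T)$ is an algebraic category, hence locally finitely presentable and in particular cocomplete; and the forgetful functor $U \colon \coalg(F) \to \EM(\T)$ creates all colimits (as recalled in \Cref{sec:lfp}, following \cite{milius}), so $\coalg(F)$ is cocomplete and $U$ computes its colimits on carriers.

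Next I would check that $\coalgfg(F)$ is closed under finite colimits in $\coalg(F)$. Because $U$ creates, hence preserves, colimits, the carrier of a finite colimit in $\coalg(F)$ of coalgebras from $\coalgfg(F)$ is the corresponding finite colimit, taken in $\EM(\T)$, of their finitely generated carriers; the claim therefore reduces to the fact that the finitely generated objects of an lfp category are closed under finite colimits. In the present setting this is transparent: for $\T$ finitary on $\Set$ the finitely generated objects of $\EM(\T)$ are exactly the regular quotients of free algebras $\T(n)$ on finite sets $n$, and the initial object $\T(\emptyset)$, a coproduct of a quotient of $\T(m)$ with a quotient of $\T(n)$ (a quotient of $\T(m) + \T(n) \cong \T(m + n)$, as the free functor preserves coproducts), and a coequalizer of a parallel pair out of a quotient of $\T(n)$ (again a quotient of $\T(n)$) are all of this form. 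Hence $\coalgfg(F)$ is closed under finite colimits in $\coalg(F)$, so it has finite colimits, so it is filtered, and therefore $\vartheta F = \colim(\coalgfg(F) \xhookrightarrow{i} \coalg(F))$ is a filtered colimit.

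I expect the closure of finitely generated objects under finite colimits --- concretely, under coequalizers and quotients --- to be where the content sits; the initial-object and coproduct cases are immediate, whereas the quotient case genuinely uses that a composite of regular epis is a strong epi (equivalently, the description of finitely generated $\T$-algebras as strong-epi images of finitely generated free algebras). A secondary point to handle with care is that $\coalgfg(F)$ is essentially small --- which holds because finitely generated carriers form a set up to isomorphism and, for each carrier $X$, the coalgebra structures form the set $\EM(\T)(X, FX)$ --- so that ``filtered colimit'' has its usual meaning.
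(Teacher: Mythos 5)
Correct. Note first that the paper does not prove this statement at all: it is imported by citation from \cite{effect}, so there is no in-paper argument to compare against. Your proof reconstructs the standard argument from that line of work (Milius et al.\ on the locally finite fixpoint): show $\coalgfg(F)$ is (essentially small and) closed under finite colimits of $\coalg(F)$, hence filtered, using that $U \colon \coalg(F) \to \EM(\T)$ creates colimits and that finitely generated algebras for a finitary monad on $\Set$ are closed under finite colimits. All the steps check out, including the observation that finitariness of $F$ plays no role here --- it is only needed later, for $\vartheta F$ being a fixpoint and a subcoalgebra of the final coalgebra. Two small points worth making explicit if you write this up: for the coproduct case, the induced map $\T(m+n) \cong \T(m) + \T(n) \to A + B$ is surjective because its image is a subalgebra of $A+B$ containing the images of both coproduct injections, and any such subalgebra is all of $A+B$ by the universal property; and nonemptiness of $\coalgfg(F)$ should be witnessed concretely, e.g.\ by the unique coalgebra structure $\T(\emptyset) \to F(\T(\emptyset))$ on the initial (finitely generated) algebra. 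With those spelled out, your argument is a complete and faithful stand-in for the cited result.
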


In this paper, we offer an explicit formula for $\vartheta F$ for finitary coalgebras. 

\begin{equation}\label{eq:colimit}
\vartheta F = \colim(\overline{U} \circ U \circ i) = \left(\bigsqcup_{((C,\alpha), \beta)\in \coalgfg(F)} \overline{U} \circ U \circ i ((C, \alpha),\beta) \right)\Big/{\sim}
\end{equation}

and hence

\begin{equation}\label{eq:colim}
\vartheta F = \left(\bigsqcup_{(C,\alpha) \in \EM(\T)_{\mathsf{fg}}} C \right)\Big/{\sim}
\end{equation}
where $\EM(\T)_{\mathsf{fg}}$ denotes the subcategory of finitely generated elements in $\EM(\T)$ and $\sim$ is the smallest equivalence relation on the coproduct in $\Set$; i.e. the disjoint union, such that $x \sim \overline{U} \circ U \circ i (f (x))$ for all $f : (C, \alpha) \to (D, \beta)$ in $\coalgfg(F)$ and all $x\in C$. Moreover, since $\coalgfg(F)$ is filtered, we have that the equivalence relation is precisely
\begin{equation}\label{eq:sim}
$$(x \in C) \sim (y \in D)\text{ $\iff$  }\exists E, (f : C \to E), (g : D \to E)\text{ such that }f(x) = g(y).$$
\end{equation}

\subsection{Fixpoints for Reduction}\label{sec:fp_red}
We now use the explicit description of $\vartheta F$ (\Cref{eq:colim}) and focus on the image of $\T(X)$ inside $\vartheta F$: 
 \begin{equation}\label{eq:general}
 \xymatrix@C=.75cm@R=.25cm{
 X \ar[dd]_{<\out,\delta>}\ar[r]|{\ \eta_X\ }& \T (X)\ar[dr]_{q\circ i_{\T(X)}}\ar[ddl]^-{<\out^\sharp,\delta^\sharp>}  \ar@{-->}[rr]|{\ \ \sem -\ \ }&&O^{A^*}\ar[dd]_\cong\\
 && \left({\bigsqcup C}\right)\Big/{\sim}\ar@{>->}[ur]\ar[d]\\
 O\times \T(X)^A \ar@{->}[rr] &&O\times ( \left({\bigsqcup C }\right)\Big/{\sim})^A\ar@{>->}[r]& O\times  {(O^{A^*})}^A \
 }
\end{equation}
Here, $i_{\T(X)}$ represents the inclusion of $\T(X)$ into the coproduct and $q$ is the projection onto the quotient. As $\T(X)$ generates the state space of an automaton with set of states $X$, we need a notion of quotient to be able to identify redundant states, which is where we make use of the locally finite fixpoint.

We first take the image of $\T(X)$ into the coproduct $\bigsqcup C$ (in $\Set$). $C$ represents the underlying set of a finitely generated element in $\EM(\T)$. We now use \Cref{eq:colim} to see that the image of $(\T(X), <\out^\sharp,\delta^\sharp>)$ under $\sem -$, which we denote by $\textsf{im}(\T(X))$, lies in $\bigsqcup C / \sim$. The goal now is to reduce the set of states $X$ by using the definition of $\sim$.

First note that any morphism in $\EM(\T)$ with domain of the form $\T(X)$ is completely determined by where it sends $X$. We use the definition of $\sim$ to be consistent with that of filtered colimits since $\vartheta F$ is filtered. Since we are viewing $\textsf{im}(\T(X))$ inside $\vartheta (F)$, $f$ and $g$ in the definition of the colimit (\Cref{eq:sim}) are restricted to having (co)domain $\T(X)$. 

It is at this point where our reason for needing a base becomes necessary. In order to proceed with determining what $f$ and $g$ will be in \cref{eq:sim}, we must calculate a base of $\T(X)$ in $\EM(\T)$. We assume an existing algorithm for calculating $\overline X$, the {\em base} of $\T(X)$ (\Cref{remark:base}). In other words, $\T(X)$ generates the same language as $\T(\overline X)$. Choose the map $f : \T(X) \to \T(X)$ to be defined as $f(x)=x$ for all $x\in \overline X$. For $x\in X\backslash \overline{X}$, choose $y\in \T(X\backslash \{x \})$ such that $y$ accepts the same language as $x$ and define $f(x) = y$. Note that the definition of base assumes a choice may need to be made here, that is, there may be more than one $y\in \T(X\backslash \{x \})$ that accepts the same language as $x$, resulting in a reduced automaton that may not be unique. 

We note that the image of $f$ is contained within $\T(\overline X)$. For the function $g$, we choose $g$ to be the identity map on $\T(X)$. This will reduce the set $X$ by keeping all states $x\in \overline{X}$ as $f$ relates them to themselves, and relating redundant states in $X$ to a combination of states in $\overline{X}$. Both $f$ and $g$ were designed so they can be lifted to the category $\EM(\T)$. 

\paragraph*{Correctness} 
We show two things here. The first is that $\T(X) \sim \T(\overline X)$ for a base $\overline X$, and the second is no smaller $Y \subsetneq \overline{X}$ satisfies $\T(X) \sim \T(Y)$; that is, $\overline X$ is a smallest such set that generates the same language.

Consider the following diagram.  
\begin{equation}\label{correctness}
\begin{tikzcd}
\T(X) \arrow[r, "f"] \arrow[d, "c"'] & \T(\overline{X}) \arrow[d, "\overline{c}"'] & \T(\overline{X})  \arrow[l, "\mathbf{1}"] \arrow[d, "\overline{c}"] \\
F(\T(X)) \arrow[r, "F(f)"'] & F(\T(\overline{X})) & F(\T(\overline{X}))  \arrow[l, "F(\textbf{1})"]
\end{tikzcd}
\end{equation}
where $\mathbf{1}$ represents the identity morphism and $f$ is the same $f$ as defined at the end of the previous subsection. Since $\T(\overline{X}) \subseteq \T(X)$, $\overline{c}$ is simply a restriction of $c$. Using that $g$ here is the identity map, we now show the diagram is commutative.

\begin{theorem}
The diagram in \Cref{correctness} is commutative.
\end{theorem}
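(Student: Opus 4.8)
The plan is to verify commutativity of the two squares in Diagram~\eqref{correctness} separately. The right-hand square is the easy one: it commutes essentially by functoriality, since $g = \mathbf{1}$ is the identity on $\T(\overline X)$, so $F(\mathbf 1) = \mathbf 1_{F(\T(\overline X))}$, and $\overline c \circ \mathbf 1 = \overline c = \mathbf 1 \circ \overline c$. Thus the real content is the left-hand square, i.e. showing $\overline c \circ f = F(f) \circ c$ as morphisms $\T(X) \to F(\T(\overline X))$.

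For the left square, I would first reduce to checking the equation on generators: as noted just before the theorem statement, every morphism in $\EM(\T)$ out of a free object $\T(X)$ is determined by its restriction to $X$, and both $\overline c \circ f$ and $F(f) \circ c$ are $\EM(\T)$-morphisms (here $c = <\out^\sharp,\delta^\sharp>$ is the lifted coalgebra structure, $f$ was designed to lift, and $F = O\times(-)^A$ preserves the algebra structure), so it suffices to show they agree on each $x \in X$. I would then split into two cases according to the definition of $f$. For $x \in \overline X$, we have $f(x) = x$ (viewed in $\T(\overline X)$), and since $\overline c$ is the restriction of $c$ to $\T(\overline X) \subseteq \T(X)$, both sides reduce to $c(x)$ read inside $F(\T(\overline X))$ versus $F(f)(c(x))$; here one uses that on the sub-object $\T(\overline X)$ the coalgebra structure already lands in $F(\T(\overline X))$ and $F(f)$ acts as identity there. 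For $x \in X \setminus \overline X$, $f(x) = y \in \T(X\setminus\{x\})$ is chosen so that $y$ accepts the same language as $x$, i.e. $\sem{\eta(x)} = \sem{y}$; the key point is that the final map $\sem{-}$ is a coalgebra homomorphism (the commutativity of $\star$ in Diagram~\eqref{eq:general}, coming from \Cref{thm:final}), together with the fact that $\sem{-}$ is mono on the image, to transport the equality of behaviours back to an equality $F(f)(c(x)) = \overline c(f(x))$ of one-step transitions into $\T(\overline X)$.

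The step I expect to be the main obstacle is precisely this second case: one must argue that "accepting the same language" (a statement about the infinite behaviour under $\sem{-}$) forces the one-step images under the lifted coalgebra map to agree after applying $F(f)$. This is not automatic from the definition of redundancy alone; it relies on $f$ being chosen compatibly across all of $X$ (so that $f$ is a genuine $\EM(\T)$-coalgebra morphism into the reduced coalgebra), and on the finality/mono properties established via \Cref{thm:final} and the locally finite fixpoint being a subcoalgebra of the final coalgebra (\Cref{thm:fixpoint}). Concretely I would unwind $\delta^\sharp$ and $\out^\sharp$ on the element $y = f(x)$, use naturality of $\eta,\mu$ and the algebra laws to push $f$ through, and match the result against $\delta^\sharp,\out^\sharp$ evaluated at $x$, using behavioural equivalence to close the gap. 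I would phrase the bulk of this as: since $\sem{-}$ is an injective coalgebra homomorphism and $\sem{\eta(x)} = \sem{f(x)}$, the diagram chase comparing $F(f)\circ c$ and $\overline c \circ f$ on $x$ commutes after post-composition with the (mono) map into $O^{A^*}$, hence commutes.
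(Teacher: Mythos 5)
Your overall strategy coincides with the paper's own proof: both squares are verified elementwise on the generators $x \in X$ (using that a morphism out of a free algebra $\T(X)$ is determined by its values on $X$), and the right square is immediate because $g = \mathbf{1}$. The paper's argument is literally the computation $F(f)\circ c(x) = F(f)\langle \out(x),\delta_x\rangle = \langle \out(f(x)), \delta_{f(x)}\rangle = \overline{c}(f(x))$, plus the analogous identity for the right square. So you are following the same route, and you have correctly isolated the only nontrivial point, which the paper's middle equality takes as read: for $x \in X\setminus\overline{X}$ one needs $\out(f(x)) = \out(x)$ and $f\circ \delta_x = \delta^\sharp(f(x))$.

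However, the way you propose to close that step does not work as stated. You want to postcompose with $\sem{-}$ and cancel it as a mono, but $\sem{-}$ is not injective on $\T(\overline{X})$ (nor is $O\times\sem{-}^A$ injective on $F(\T(\overline{X}))$): the mono in the factorization \eqref{eq:factor} is the inclusion of the image $I$ (equivalently of $\vartheta F$) into $O^{A^*}$, not the map out of $\T(\overline{X})$. The paper's own \Cref{ex:square} exhibits a reduced PA whose base states are not affinely independent, so distinct elements of $\D(\overline{X})$ (e.g.\ $\tfrac12 q_1+\tfrac12 q_3$ and $\tfrac12 q_2+\tfrac12 q_4$) have equal languages. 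Consequently, finality together with $\sem{\eta(x)} = \sem{f(x)}$ only gives that $f(\delta_x(a))$ and $\delta^\sharp(f(x))(a)$ accept the same language, not that they are equal in $\T(\overline{X})$; your cancellation argument therefore yields commutativity only up to behavioral equivalence (i.e.\ after mapping into the final coalgebra), which is strictly weaker than the on-the-nose commutativity in $\EM(\T)$ claimed by the theorem. To get the stronger statement one must build the one-step compatibility into the choice of $f$ itself on the non-base generators, rather than derive it from redundancy plus finality. In fairness, the paper's one-line computation glosses over exactly the same point, but your proposal explicitly leans on an injectivity claim that the paper's own example contradicts.
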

\begin{proof}
For the left square,
\begin{equation} F(f) \circ c(x) = F(f) < \out(x) , \delta_x > = <\out(f(x)), \delta_{f(x)} > = \overline{c}(f(x))\end{equation}
For the right square, \begin{equation} F(\mathbf{1}) \circ \overline{c}(\overline x) = F(\mathbf{1}) <\out(\overline x), \delta_{\overline x}> = <\out(\overline x), \delta_{\overline x}> = \overline c \circ \mathbf{1} (\overline x)\end{equation} \end{proof}

We next show $\T(\overline X)$ corresponds to the reduced state space; that is, there are no smaller state spaces that will generate the same language.

\begin{theorem}
If $Y \subsetneq X$, then $\T(Y) \not\sim \T(X)$. 
\end{theorem}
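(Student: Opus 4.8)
I read this as the minimality half of the correctness argument, i.e.\ with the base $\overline X$ playing the role of the left‑hand $X$ (for a general proper subset $Y\subsetneq X$ one can of course have $\T(Y)\sim\T(X)$ --- that is exactly what a reduction achieves, cf.\ \Cref{ex:redvsminimal}), so the goal is: for $Y\subsetneq\overline X$, show $\T(Y)\not\sim\T(X)$. The plan is a two‑step argument whose real content is the first step: translating ``$\T(Y)\sim\T(X)$'' into a concrete statement about accepted languages. Since $\coalgfg(F)$ is filtered, $\sim$ has the simple shape of \Cref{eq:sim}; the witnessing morphisms may be normalised (as in \Cref{sec:fp_red}) to have (co)domains among the $\T$ of subsets of $X$ and to lift to $\EM(\T)$; every $\EM(\T)$‑morphism out of a free algebra $\T(Z)$ is fixed by its values on the generators $Z$; and $\sem{-}$ is an $\EM(\T)$‑homomorphism, so $\{\sem t : t\in\T(Y)\}$ is the subalgebra of $O^{A^*}$ generated by $\{L_x : x\in Y\}$. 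Combining these (and using $Y\subseteq X$, so $\T(Y)\subseteq\T(X)$ and one inclusion of images is automatic) yields
\[
\T(Y)\sim\T(X)\iff \text{for every }x\in X\text{ there is }t\in\T(Y)\text{ with }\sem t=\sem{\eta(x)}=L_x,
\]
i.e.\ iff $Y$ is again a generating set in the sense of the base condition (\Cref{remark:base}).

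Given that equivalence, the contradiction is immediate: if $Y\subsetneq\overline X$ satisfied $\T(Y)\sim\T(X)$, then $Y$ would be a subset of $X$, properly contained in $\overline X$, that still generates every $L_x$ with $x\in X$ --- contradicting that $\overline X$ is a \emph{minimal} such set. Concretely, choosing $x_0\in\overline X\setminus Y$, the element $t\in\T(Y)\subseteq\T(\overline X\setminus\{x_0\})$ realising $L_{x_0}$ witnesses that $x_0$ is redundant inside $\overline X$, and substituting via the monad multiplication $\mu$ (language‑preserving because $\sem{-}$ is an algebra homomorphism) shows $\T(\overline X\setminus\{x_0\})$ already generates all of $\{L_x : x\in X\}$, i.e.\ $\T(\overline X\setminus\{x_0\})\sim\T(X)$, again contradicting minimality of $\overline X$.

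The obstacle lies entirely in the bookkeeping of the first step, not in the final contradiction. Three things must be pinned down: (i) filteredness of $\coalgfg(F)$, to license the form of $\sim$ in \Cref{eq:sim}; (ii) that the comparison morphisms can be taken as $\EM(\T)$‑liftable maps between free algebras on subsets of $X$ --- this is precisely the normalisation set up in \Cref{sec:fp_red}; and (iii) the clean passage between ``$\sim$'' and ``generates the same languages'', which hinges on $\sem{-}$ being an $\EM(\T)$‑morphism, so that hitting the generators $\eta(x)$ suffices and $\mu$‑substitution is language‑preserving. Finally, one should make precise in \Cref{remark:base} that a base means a \emph{minimal} subset $\overline X\subseteq X$ with the same image in $\vartheta F$ as $X$ (equivalently, generating the same languages); that is the reading under which this theorem is both true and non‑vacuous.
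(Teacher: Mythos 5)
Your proposal is correct and takes essentially the same route as the paper's own proof: assume $\T(Y)\sim\T(X)$ for $Y\subsetneq\overline{X}$, use the filtered-colimit description of $\sim$ together with the fact that $\vartheta F$ is a subcoalgebra of the final coalgebra (so $\sim$-related elements accept the same language), and derive a contradiction with the minimality of the base $\overline{X}$. Your reading of the statement with $\overline{X}$ in place of the left-hand $X$ is exactly the one the paper's proof adopts, and your explicit generation/$\mu$-substitution step merely spells out the minimality contradiction that the paper leaves implicit.
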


\begin{proof}
To show this is the {\em most} reduced state space, assume we have some $Y \subsetneq \overline{X}$. We will show $\T(Y)  \not\sim \T(X)$. 

Assume that there does exist some $Y \subsetneq \overline{X}$ such that $\T(Y) \sim \T(X)$. This would mean we have the following commutative diagram.  

\begin{center}
\begin{tikzcd}
\T(X) \arrow[r, "f"] \arrow[d, "c"'] & C \arrow[d, "e"'] & \T(Y)  \arrow[l, "g"] \arrow[d, "d"] \\
F(\T(X)) \arrow[r, "F(f)"'] & F(C) & F(\T(\overline{X}))  \arrow[l, "F(g)"]
\end{tikzcd}
\end{center}
where $C$ is a finitely generated object in $\EM(\T)$ and $f$ and $g$ are maps in $\EM(\T)$. Since $Y \subsetneq \overline X$ and $\overline X$ is a base for the state space, this would imply there exists some $x\in \T(X)$ and $y\in \T(Y)$ such that $f(x)=g(y)$ but $x$ and $y$ do not generate the same language. Moreover, since \Cref{thm:fixpoint} says $\vartheta F$ is a subcoalgebra of the final coalgebra, $f(x)=g(y)$ implies $x\sim y$ and thus $\sem{x}= \sem y$, a contradiction. Therefore, there exists no $Y \subsetneq \overline{X}$ such that $\T(X) \sim \T(Y)$.
\end{proof}

\section{Examples}\label{sec:examples}

In this section, we give two examples of finitary automata. We give explicit constructions of the locally finite fixpoint for each example and utilize algorithms to define the map $f$ in each example's respective colimit.

\subsection{Probabilistic Automata}

Our first example of finitary automata is probabilistic automata. 

\begin{remark}
In \cref{sec:future}, we will make a connection with the work done in \cite{proper}. We note that in this section, although we work over the category $\EM(\D)$, the category of convex sets, everything in this section can easily be translated to $\EM(\D_s)$, the category of positive convex algebras ($\mathbf{PCA})$. ($\D_s$ denotes the finitely supported sub-distribution monad). 
\end{remark}

\begin{definition}[Distributions]\label{dist}

Let $\D : \Set \to \Set$ denote finitely supported distributions on a set $X$, defined as 
$\D(X) = \left\{ \phi: X \to [0,1] \mid \sum \phi(x)=1 \right\}$ where the support of each $\phi$ is finite. The support of a distribution is the set $\textsf{supp}(\phi)= \{ x \mid \phi(x)\neq 0\}$.

\end{definition}

\begin{definition}[Probabilistic automaton (PA)]
    A \emph{probabilistic automaton} is a tuple $(Q, <\delta, \out>)$, consisting of a set of states $Q$, a transition function $\delta \colon Q \to \D(Q)^A$, and an output function $\out \colon Q \to [0, 1]$.
\end{definition}
The output function can be seen as a row vector (labelled by $Q$) with entries in $[0,1]$. $\delta$ can be equivalently represented as an $A$-indexed family of functions $\delta_a \colon Q \to \D(Q)$, each of which can be seen as square matrices (labelled by $Q$). Note that $\D(\two)=[0,1]$. Thus, a probabilistic automaton can be viewed as the coalgebra $F(X) = [0,1] \times X ^A$ on $\EM(\D)$, where $\D$ is clearly a finitary monad. Since $\EM(\D) \cong \Conv$, we use the two interchangeably for the remainder of this paper.

\paragraph*{Reverse determinization: from $\Conv$ to $\Set$.} 
If we begin with a convex set $C$, we want to determine when it can be written in the form $\D(Y)$ where $Y$ is a finite set. We characterize this in the following theorem; the proof is a consequence of \cite[Theorem 1]{basis}, stating that every finitely generated convex set has a unique base, and that base is precisely the extreme points of the convex set. 

\begin{theorem}

A convex set $C$ is isomorphic to $\D(Y)$ as convex sets for a finite set $Y$ if and only if $C$ is a simplex. 

\end{theorem}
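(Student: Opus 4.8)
The plan is to prove both directions of the biconditional, using the notion of a simplex as a convex set that is affinely isomorphic to a finite-dimensional standard simplex $\Delta^n = \{(t_0,\dots,t_n) \mid t_i \ge 0, \sum t_i = 1\}$, equivalently a convex set whose points are \emph{unique} convex combinations of a finite set of affinely independent extreme points. The forward implication is almost definitional: if $C \cong \D(Y)$ for a finite set $Y = \{y_0, \dots, y_n\}$ as convex sets, then the elements $\eta(y_i)$ are affinely independent generators and every $\phi \in \D(Y)$ is the convex combination $\sum_i \phi(y_i)\,\eta(y_i)$, uniquely so because the coordinates $\phi(y_i)$ are recovered by evaluation; hence $\D(Y)$ is (affinely isomorphic to) the standard $n$-simplex, so $C$ is a simplex.

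For the converse, the plan is to invoke \cite[Theorem 1]{basis} as cited: every finitely generated convex set has a unique base, and that base is exactly the set of extreme points. So first I would observe that if $C$ is a simplex then it is finitely generated (by the $n+1$ vertices), hence it has a base $\overline C$ consisting of finitely many extreme points, say $\overline C = \{c_0, \dots, c_m\}$. ``Base'' here means a minimal generating set in the sense of \Cref{remark:base}, so $\D(\overline C) \cong C$ as convex sets via the map $\phi \mapsto \sum_i \phi(c_i)\, c_i$; taking $Y = \overline C$ gives the desired finite set. The one thing that genuinely needs the simplex hypothesis (as opposed to mere finite generation) is that this map is \emph{injective} — for a general finitely generated convex set, e.g. a square, distinct distributions on the extreme points can give the same point, so $C \not\cong \D(Y)$ even though $C$ has a finite base. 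Simplex-ness is precisely the condition that the barycentric coordinates are unique, which is what upgrades the canonical surjection $\D(\overline C) \twoheadrightarrow C$ to an isomorphism.

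Concretely I would structure the converse as: (i) by \cite[Theorem 1]{basis}, let $\overline C$ be the (finite) set of extreme points of $C$, which is its unique base; (ii) the evaluation-and-recombine map $e\colon \D(\overline C) \to C$, $e(\phi) = \sum_{c \in \overline C}\phi(c)\cdot c$, is a surjective affine (hence convex) map since $\overline C$ generates $C$; (iii) using that $C$ is a simplex, i.e. affinely isomorphic to $\Delta^n$ with the $n+1$ vertices as its extreme points, every point of $C$ has unique barycentric coordinates with respect to $\overline C$, so $e$ is injective; (iv) an injective surjective affine map between convex sets is an isomorphism in $\Conv$, so $C \cong \D(\overline C)$ with $\overline C$ finite.

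The main obstacle is really just pinning down the definition of ``simplex'' compatibly with the cited result and making step (iii) airtight: one must be careful that a convex set can have finitely many extreme points and be finitely generated without being a simplex (the square again), so the argument cannot merely say ``finitely generated $\Rightarrow$ isomorphic to $\D$ of its base'' — that is false, and indeed it is exactly this failure that motivates the \emph{base condition} discussed in \Cref{remark:base} and \Cref{sec:wa}. The right framing is that ``simplex'' is defined so that the barycentric-coordinate map is a bijection, and then the theorem reduces to matching that definition against the unique-base statement of \cite[Theorem 1]{basis}; the rest is a routine check that the maps involved are convex (equivalently, by the affine/convex equivalence from \cite{poly} recalled earlier, affine) and mutually inverse.
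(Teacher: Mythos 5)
Your proposal is correct and follows essentially the same route as the paper: the isomorphism between a simplex and $\D$ of its vertex set via unique barycentric coordinates, with the cited unique-base result supplying the finitely many extreme points. If anything, you are more thorough than the paper's written proof, which only constructs the barycentric-coordinate map $C \to \D(Y)$ for the simplex-to-$\D(Y)$ direction and checks convexity, leaving injectivity/surjectivity and the converse implication (that $\D(Y)$ is itself a simplex) implicit, whereas you spell out both directions and the bijectivity argument explicitly.
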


\begin{proof}

Consider a simplex $C$ with the set of vertices $Y$. By definition of simplex, the $Y$ are affinely independent, and thus, every point in $C$ is a unique convex combination of vertices.  Define a map $f : C \to \D(Y)$ where $y \mapsto y $ for all $y \in Y$, and such that for every $c \in C$ that is not a vertex, rewrite $c$ as its unique convex combination of vertices. Then $c$ is of the form $r_1 \cdot y_1 + \cdots + r_n \cdot y_n$. Define the map $r_1 \cdot y_1 + \cdots + r_n \cdot y_n \mapsto \varphi$ where $\varphi(y_i) = r_i$ for $i = 1 \cdots n$. This map is clearly convex since $r_1\cdot f(y_1) + \cdots + r_n \cdot f(y_n) = r_1 \cdot \varphi_{y_1} + \cdots r_n \cdot \varphi_{y_n}$ where $\varphi_{y_i}(y_i) = 1$ is equal to $\varphi$ defined by $\varphi(y_i) = r_i$.
\end{proof}

Therefore, given a coalgebra $C \to [0,1] \times C^A$, one can find a finite set $Y$ and coalgebra $Y \to [0,1] \times \cal{D}$$(Y)^A$ in $\Set$ such that $\cal{D}$$(Y) \cong C$ (in $\Conv$) if and only if $C$ is a simplex. The set $Y$ will be precisely the vertices, or extreme points, of the simplex $C$. 

We revisit the diagram in \cref{eq:general} and obtain the following diagram in $\Conv$. 

 \begin{equation}\label{eq:diag}
 \xymatrix@C=.75cm@R=.2cm{
 X \ar[dd]_{<\out,\delta>}\ar[r]^{\eta_X}& \D (X)\ar[ddl]^-{<\out^\sharp,\delta^\sharp>}  \ar@{->>}[r]|-{\, e\, } \ar@/^{3ex}/@{-->}[rrr]|{\ \ \sem -\ \ }& I \cong \D(Y)\ar@{>->}[dr]\ar[dd] \ar@{>->}[rr]^-i&&[0,1]^{A^*}\ar[dd]_\cong\\
 &&&\vartheta F\ar@{>->}[ur]\ar[d]\\
 [0,1]\times \D(X)^A \ar@{->}[rr] &&[0,1]\times I^A\ar@{>->}[r]&[0,1]\times (\vartheta F)^A\ar@{>->}[r]& [0,1]\times  {([0,1]^{A^*})}^A \
 }
\end{equation}

Next, we apply an algorithm developed in \cite{alg} which finds the frame of the conical hull. The full algorithm can be found in \cref{sec:algo}. We also include a complete running example to show how the algorithm works in \cref{ap:ex_conv}. The algorithm remains the same for finding extreme points as for finding the conical hull, except a row of 1's is appended to the bottom of the matrix the algorithm utilizes. Therefore, this algorithm can also be used in the next subsection on WAs for the semiring $\s = \R^+$ (by not appending the row of 1's on the bottom of the matrix). Although we could use a span argument here to find the extreme points of a convex polytope, we use this algorithm for two reasons. The first reason is for efficiency, while the second reason is that we want to demonstrate any algorithm that exists that finds a base can be inserted into our approach. 

The algorithm will find the base of $\conv(X)$, the convex hull of the language acceptance vectors for $x\in X$. By \Cref{sec:minimality}, the image of $\D(X)$ in $\vartheta F$ in \cref{eq:diag} will be equivalent to $\D(\overline X)$, where $\overline X$ are the set of extreme points of $\conv(X)$.

\begin{example}[Reduced is not unique nor minimal]\label{ex:square}

It should be emphasized that the reduced automaton, as described above, is not unique unless the vertices of $\conv(X)$ are affinely independent. If they are not affinely independent, when we remove a state $x\in X \backslash \overline{X}$, we write it as one of the (possibly many) convex combination of states in $\overline{X}$. This choice influences how we define the function $f : \D(X) \to \D(\overline{X})$ above and therefore the automaton we obtain. 

Let us illustrate this through an example (see also \Cref{ex:redvsminimal}). 

   \begin{tikzpicture}[baseline=(current bounding box.center), node distance = 1.5cm, arrows=->,font=\footnotesize]
        \tikzset{every state/.style={minimum size=-2pt}}
        \node[state] (q1) {$q_1$};
        \node[state, right of=q1, accepting  right, accepting text = {$1$}] (q2) {$q_2$};
        \node[state, below of=q2, node distance=1cm, accepting right, accepting text = {$1$}] (q3) {$q_3$};
        \node[state, below of=q1, node distance=1cm] (q4) {$q_4$};
        \draw (q1) edge[above] node{$a,1$} (q2)
              (q2) edge[out=330,in=300,looseness=8] node[right]{$a,1$} (q2)
              (q3) edge[below] node{$a,1$} (q4)
              (q4) edge[loop left] node{$a,1$} (q4);
    \end{tikzpicture}
       \hspace{1cm}
    \begin{tikzpicture}[baseline=(current bounding box.center), scale=1,font=\footnotesize]
        \coordinate (q1) at (0,1);
        \coordinate (q2) at (1,1);
        \coordinate (q3) at (1,0);
        \coordinate (q4) at (0,0);
        
        \draw[fill=black] (q1) circle (0.15em)
        node[above left] {$L_{q_1}$};
        \draw[fill=black] (q2) circle (0.15em)
        node[above right] {$L_{q_2}$};
        \draw[fill=black] (q3) circle (0.15em)
        node[below right] {$L_{q_3}$};
        \draw[fill=black] (q4) circle (0.15em)
        node[below left] {$L_{q_4}$};
        
        \draw[->] (0,0) -- (1.5, 0);
        \draw[->] (0,0) -- (0, 1.5) node[above] {\ };
        \draw[-] (0,1) -- (1,1);
        \draw[-] (1,1) -- (1,0);
    \end{tikzpicture}
     \hspace{1cm}
      \begin{tikzpicture}[arrows=->,baseline=(current bounding box.center),font=\footnotesize]
        \tikzset{every state/.style={minimum size=-2pt}}
        \node[initial, state, accepting below, accepting text = {$1$}] (q1) {$q_2$};
        \draw (q1) edge[loop right] node{$a,1$} (q1);
    \end{tikzpicture}

The automaton on the left has state $q_1$ whose language $L_{q_1}$ can be represented by the vector $(0,1)$ in $\mathbb{R}^2$ (depicted in the middle), similarly we have $q_2$ as (1,1), $q_3$ as (1,0), and $q_4$ as (0,0). The convex hull of these points gives a square in $\mathbb{R}^2$, and all states are extreme points (and hence the automaton is reduced). It is immediately clear these points are not affinely independent, since $\mathbb{R}^n$ can have at most $n+1$ affinely independent points. Take for instance the center of the square $(\frac{1}{2}, \frac{1}{2})$. This can be written as both $\frac{1}{2} q_1 + \frac{1}{2} q_3$ and also as $\frac{1}{2} q_2 + \frac{1}{2} q_4$. Once a point can be written as two different convex combinations, it can be written as infinitely many different convex combinations. Thus, the choice $f$ makes will determine what convex combination of states are used when eliminating a specific point. 

This example also illustrates the distinction between reduction and minimization. If we choose initial state $q_2$, a minimal automaton recognizing the language $L_{q_2}$ is depicted on the right above. The key difference between the reduced and the minimal automaton is that in the latter {\em reachability} has been taken into account, whereas the reduced automaton contains all states necessary for {\em any} choice of initial distribution. 
\end{example}

\subsubsection{Reducing using probabilistic observation tables}\label{datastructure}

We now present an algorithm for reducing probabilistic automata, providing a data structure that can be used to distinguish the different languages accepted by the states of the original automaton. We fix a finite probabilistic automaton $(Q,<\out,\delta>)$.
The data structure we use to reduce this automaton stems from Angluin's learning algorithm for DFAs~\cite{angluin}.
Although our rows are labeled by states rather than words, we reuse the name \emph{observation table} here. For readability we will use $\obs$ to denote the final coalgebra map $\sem - \colon \D(X) \to [0,1]^{A^*}$. For $x\in X$ we abuse notation and write $\obs(x)$ instead of $\obs(\eta(q))$.

\begin{definition}[Observation table]
    An \emph{observation table} is defined by a set of words $E \subseteq A^*$.
    It is the function $\row_E \colon Q \to [0, 1]^E$ defined by $\row_E(q)(e) = \obs(q)(e) $.
    We often identify the table by the set $E$.
\end{definition}

\begin{example}\label{fig:tetra}
Consider the automaton on the left below. On the right is an observation table for  $E=\{\varepsilon, a, aa\}$.
\begin{center}
\vspace*{-.2cm}
    \begin{tikzpicture}[baseline=(current bounding box.center), node distance = 1.2cm, arrows=->,font=\scriptsize,scale=.95]
        \tikzset{every state/.style={minimum size=0pt}}
        \node[state, accepting below, accepting text = {$1/2$}] (q1) {$q_1$};
        \node[state, right of=q1, accepting below, accepting text = {$1/2$}] (q2) {$q_2$};
        \node[state, right of=q2, accepting below, accepting text = {$1$}] (q3) {$q_3$};
        \node[state, right of=q3, accepting below, accepting text = {$0$}] (q4) {$q_4$};
        \node[state, above of=q2, accepting left, node distance=.7cm, accepting text = {$1/2$}] (q5) {$q_5$};
        \draw (q1) edge[above] node{$1$} (q2)
              (q2) edge[above] node{$1$} (q3)
              (q3) edge[above] node{$1$} (q4)
              (q4) edge[loop right] node{$1$} (q4)
              (q5) edge[bend left,right,pos=.6] node{$1/3$} (q3)
              (q5) edge[bend left, above] node{$2/3$} (q4);
    \end{tikzpicture}
    \hspace{0.5cm}
   {\footnotesize \begin{tabular}{ c | c c c }
        & $\eword$ & $a$ & $aa$ \\
        \hline
        $q_1$ & $1/2$ & $1/2$ & $1$ \\[.5ex]
        $q_2$ & $1/2$ & $1$ & $0$ \\[.5ex]
        $q_3$ & $1$ & $0$ & $0$ \\[.5ex]
        $q_4$ & $0$ & $0$ & $0$ \\[.5ex]
        $q_5$ & $1/2$ & $1/3$ & $0$
    \end{tabular}}
    \end{center}

    \end{example}
The image of the $\row_{A^*}$ function clearly generates the image of $\obs$, and from this finite set of generators we want to extract the base which will be used as states of the reduced automaton. 
However, this table has an infinite number of columns.
It is often useful to consider convex combinations of the rows of an observation table. As such, we introduce the following definition.

\begin{definition}[Probabilistic extension of $\row$]
For each $E \subseteq A^*$, we define the {\em probabilistic extension of $\row$}, written $\overline{\row}_E \colon \D(Q) \to [0, 1]^{E}$, to be the unique homomorphism extending $\row_E$. For $v_1,v_2 \in \D(Q)$, we write $v_1 \equiv_E v_2$ when $\overline{\row}_E(v_1)(e) = \overline{\row}_E(v_2)(e)$ for all $e \in E$.
\end{definition}

In particular, we have $\overline{\row}_{A^*} = \obs$. Observe that for each $E \subseteq A^*$, the image of $\overline{\row}_E$ is a convex polytope $P_E \subseteq [0,1]^E$. When $E \subseteq F \subseteq A^*$, the restriction map from $[0,1]^F$ to $[0,1]^E$ yields a surjective homomorphism from $P_F$ to $P_E$. To determine which states in $Q$ correspond to extreme points of $P_{A^*}$, we wish to find a finite set $E$ such that the restriction map from $P_{A^*}$ to $P_E$ is an isomorphism. 

\begin{definition}
Let $E \subseteq A^*$. We inductively define, for each $k \in \N$ 
$$A^0E \eqdef E\quad\text{ and }\quad A^{k+1}E \eqdef \{aw \mid a \in A, w \in A^kE \}.$$ 
We then define $A^*E \eqdef \bigcup_{k \in \N} A^kE = \{we \mid w \in A^*, e \in E\}$.
\end{definition}

\begin{definition}[Consistency]
Let $E \subseteq A^*$. We say that the observation table corresponding to $E$ is \emph{consistent} if for all $v_1, v_2 \in \D(Q)$, we have $v_1 \equiv_E v_2$ implies $v_1 \equiv_{AE} v_2$.
\end{definition}

It can be shown that when $E$ is consistent, the equivalence relations $\equiv_E$ and $\equiv_{A^*E}$ coincide. We focus primarily on the case where $\eword \in E$.

\begin{theorem}
If $E \subseteq A^*$ is consistent and $\eword \in E$, then for all $v_1, v_2 \in \D(Q)$ we have $v_1 \equiv_E v_2$ implies $\obs(v_1) = \obs(v_2)$.
\end{theorem}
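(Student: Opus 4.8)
The plan is to show that consistency forces the relation $\equiv_E$ on $\D(Q)$ to be stable under the (lifted) one-step transitions, to propagate this stability along arbitrary words, and then to read off the outputs using the column $\eword\in E$. The bridge between the observation table and the final coalgebra is the identity $\overline{\row}_E = \pi_E\circ\obs$, where $\pi_E\colon [0,1]^{A^*}\to [0,1]^E$ is the coordinate projection: indeed $\pi_E\circ\obs$ is a homomorphism in $\EM(\D)$ that agrees with $\row_E$ on the generators $\eta(q)$, so it must equal the unique homomorphic extension $\overline{\row}_E$ by the universal property defining $\overline{\row}_E$.

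First I would use that $\obs=\sem-$ is a homomorphism of $F$-coalgebras into the final coalgebra described in \cref{thm:final}. Writing $\delta_a^\sharp\colon \D(Q)\to\D(Q)$ for the $a$-component of the lifted transition map, finality gives $\obs(\delta_a^\sharp(v))(w)=\obs(v)(aw)$, hence, combining with $\overline{\row}_E=\pi_E\circ\obs$,
\[
\overline{\row}_E(\delta_a^\sharp(v))(e)=\obs(\delta_a^\sharp(v))(e)=\obs(v)(ae)=\overline{\row}_{AE}(v)(ae)
\]
for all $a\in A$, $e\in E$, $v\in\D(Q)$. Consequently $v_1\equiv_{AE}v_2$ holds if and only if $\delta_a^\sharp(v_1)\equiv_E\delta_a^\sharp(v_2)$ for every $a\in A$, so consistency of $E$ says exactly that $\equiv_E$ is preserved by each $\delta_a^\sharp$.

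Next I would show by induction on the length of $w\in A^*$ that $v_1\equiv_E v_2$ implies $\delta_w^\sharp(v_1)\equiv_E\delta_w^\sharp(v_2)$, where $\delta_w^\sharp$ is the evident composite of the maps $\delta_a^\sharp$ along the letters of $w$. The base case $w=\eword$ is trivial, and the inductive step is one application of the previous paragraph followed by the induction hypothesis; this is precisely the content of the remark that $\equiv_E$ and $\equiv_{A^*E}$ coincide when $E$ is consistent. Finally, using $\eword\in E$ together with $\overline{\row}_E=\pi_E\circ\obs$ once more, for any $w\in A^*$ we have $\obs(v_i)(w)=\obs(\delta_w^\sharp(v_i))(\eword)=\overline{\row}_E(\delta_w^\sharp(v_i))(\eword)$; since $\delta_w^\sharp(v_1)\equiv_E\delta_w^\sharp(v_2)$ by the induction above, these values agree, so $\obs(v_1)(w)=\obs(v_2)(w)$ for all $w$, i.e.\ $\obs(v_1)=\obs(v_2)$.

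I expect the only real obstacle to be bookkeeping rather than substance: carefully justifying the identity $\overline{\row}_{AE}(v)(ae)=\overline{\row}_E(\delta_a^\sharp(v))(e)$, which rests on $\obs$ being a homomorphism of $F$-coalgebras and on the universal property defining $\overline{\row}_E$, and fixing a convention for the composite transition maps $\delta_w^\sharp$ so that the induction reads cleanly. Once that bridge between the table and the final coalgebra is in place, everything else is a routine induction.
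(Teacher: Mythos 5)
Your proof is correct and follows essentially the same route as the paper: the key identity $\overline{\row}_E(\delta_a^\sharp(v))(e)=\overline{\row}_{AE}(v)(ae)$, obtained from $\overline{\row}_E=\pi_E\circ\obs$ and finality, is exactly the mechanism behind the paper's lemmas that consistency of $E$ gives $v_1\equiv_{A^kE}v_2$ for all $k$. The only difference is bookkeeping: you keep the column set $E$ fixed and push the distributions forward along $\delta_w^\sharp$, while the paper keeps $v_1,v_2$ fixed and grows the column sets $A^kE$; the two inductions are equivalent, and your final evaluation at $\eword\in E$ matches the paper's use of $w=w\eword\in A^kE$.
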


\begin{proof}
Suppose $v_1 \equiv_E v_2$ and $w \in A^*$ has length $k \in \N$. Then $w = w\eword \in A^kE$. Since $v_1 \equiv_{A^kE} v_2$ (\Cref{lemma:consistent}), we conclude $\obs(v_1)(w) = \overline{\row}_{A^kE}(v_1)(w) = \overline{\row}_{A^kE}(v_1)(w) = \obs(v_2)(w) $. 
\end{proof}

\begin{corollary}
If $E \subseteq A^*$ is consistent and $\eword \in E$, then the restriction map from the image of $\obs$ to the image of $\overline{\row}_E$ is an isomorphism.
\end{corollary}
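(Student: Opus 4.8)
The plan is to recognize the restriction map as the corestriction of a coordinate projection, and then to read off surjectivity (essentially already observed in the discussion of $P_F \to P_E$) together with injectivity (from the theorem just proved). First I would fix notation: let $\pi \colon [0,1]^{A^*} \to [0,1]^E$ be the map that forgets all coordinates outside $E$. Since convex combinations in function spaces are computed pointwise, $\pi$ is a homomorphism of convex algebras; and since $\overline{\row}_{A^*} = \obs$ while $E \subseteq A^*$, we have $\pi \circ \obs = \overline{\row}_E$. Consequently $\pi$ sends $P_{A^*} = \obs(\D(Q))$ onto $P_E = \overline{\row}_E(\D(Q))$, and I write $\rho \colon P_{A^*} \to P_E$ for this surjective convex homomorphism; this is the restriction map of the statement, and it is in particular well defined as a map between the two images rather than merely between the ambient cubes.

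Next I would prove that $\rho$ is injective. Take $p, p' \in P_{A^*}$ with $\rho(p) = \rho(p')$ and choose $v_1, v_2 \in \D(Q)$ with $\obs(v_1) = p$ and $\obs(v_2) = p'$. Then $\overline{\row}_E(v_1) = \pi(p) = \pi(p') = \overline{\row}_E(v_2)$, i.e.\ $v_1 \equiv_E v_2$. By the preceding theorem, which applies precisely because $E$ is consistent and $\eword \in E$, we conclude $\obs(v_1) = \obs(v_2)$, hence $p = p'$. So $\rho$ is a bijection.

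Finally I would upgrade this bijection to an isomorphism in $\Conv \cong \EM(\D)$. This is the standard fact that a bijective morphism in an Eilenberg--Moore category is an isomorphism: writing $b$ for the convex-algebra structure on $P_{A^*}$ and $\bar b$ for that on $P_E$, one checks that the underlying set function $\rho^{-1}$ is again a homomorphism via $\rho^{-1} \circ \bar b = \rho^{-1} \circ \bar b \circ \D(\rho) \circ \D(\rho^{-1}) = \rho^{-1} \circ \rho \circ b \circ \D(\rho^{-1}) = b \circ \D(\rho^{-1})$. Therefore $\rho$ is an isomorphism of convex sets, as claimed.

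I do not anticipate a genuine obstacle here: the mathematical weight of the corollary sits entirely in the theorem preceding it, and what remains is bookkeeping. The only points that deserve a careful sentence -- both handled above -- are that $\rho$ is well defined on the images and that a bijective convex homomorphism automatically has a convex inverse.
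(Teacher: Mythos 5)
Your proof is correct and matches the paper's intended argument: the paper treats this corollary as an immediate consequence of the preceding theorem, with surjectivity of the restriction map already noted in the discussion of $P_F \to P_E$ and injectivity supplied by the theorem, exactly as you argue. Your extra check that a bijective convex homomorphism has a convex inverse is sound and just makes explicit a step the paper leaves implicit.
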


Using this data structure, we now shift our attention to determining whether or not an observation table with a finite number of columns is consistent. The main result is summarized in the following theorem.
\begin{theorem}\label{thm:pol}
There is a polynomial-time algorithm which determines whether or not the observation table $E = \{e_1,e_2,\dots,e_m\}$ with $m \in \N$ is consistent. When $E$ is inconsistent, the algorithm produces a pair $(a,i)$ with $a \in A$ and $1 \leq i \leq m$ such that $\obs(v_1)(ae_i) \neq \obs(v_2)(ae_i)$ for some $v_1,v_2 \in \D(Q)$ satisfying $v_1 \equiv_E v_2$.
\end{theorem}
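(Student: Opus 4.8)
The idea is to reduce the consistency check to linear algebra over the weights of the automaton. Write $n = |Q|$ and identify $\D(Q)$ with the standard simplex in $\R^Q$. Since $\overline{\row}_E$ is a morphism in $\Conv$ it is the restriction of an affine map: letting $R_E$ be the $Q\times E$ matrix with entries $R_E(q,e) = \obs(q)(e)$, we have $\overline{\row}_E(v) = v^{\mathsf T} R_E$ for $v \in \D(Q)$. Hence for $v_1,v_2 \in \D(Q)$ we have $v_1 \equiv_E v_2$ iff $(v_1-v_2)^{\mathsf T} R_E = 0$, and similarly for $AE$, where $R_{AE}$ has columns indexed by $(a,e)\in A\times E$ with $R_{AE}(q,(a,e)) = \obs(q)(ae) = \sum_{q'}\delta_a(q)(q')\,\obs(q')(e)$. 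Writing $D_a$ for the $Q\times Q$ stochastic matrix $D_a(q,q') = \delta_a(q)(q')$, the column of $R_{AE}$ at $(a,e)$ equals $D_a c_e$, where $c_e$ denotes the column of $R_E$ at $e$; in particular all entries of $R_{AE}$ are computable from $R_E$ and the $D_a$.

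First I would replace the universal quantifier over the simplex by one over a linear subspace. Let $H = \{u \in \R^Q : \sum_q u_q = 0\}$. The set $\{v_1-v_2 : v_1,v_2 \in \D(Q)\}$ is a symmetric convex subset of $H$ that contains a ball around $0$ in $H$, and both conditions $u^{\mathsf T} R_E = 0$ and $u^{\mathsf T} R_{AE} = 0$ are preserved under scaling $u$ by nonnegative reals; therefore $E$ is consistent iff $\{u\in H : u^{\mathsf T} R_E = 0\} \subseteq \{u \in H : u^{\mathsf T} R_{AE} = 0\}$. Augmenting $R_E$ with an all-ones column $\mathbf 1$, so that $u^{\mathsf T}[R_E\mid\mathbf 1] = 0$ says exactly ``$u\in H$ and $u^{\mathsf T} R_E = 0$'', and taking orthogonal complements, this is equivalent to the purely linear statement: \emph{every column $D_a c_{e}$ of $R_{AE}$ lies in the column span of $M := [R_E\mid\mathbf 1]$.}

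This gives the algorithm, assuming the automaton has rational weights. First compute $R_E$: each column $c_e$ is obtained from $\out$ by $|e|$ matrix--vector products with the $D_a$, so $R_E$ is built in time polynomial in $n$, $\sum_i |e_i|$ and the bit-size of the automaton. Next compute, by Gaussian elimination, a spanning set of the orthogonal complement $K = \{u : M^{\mathsf T} u = 0\}$ of $\mathrm{colspace}(M)$. Then for each $a\in A$ and each $i$, compute $D_a c_{e_i}$ and test whether it is orthogonal to every vector of the computed basis of $K$; if so the column lies in $\mathrm{colspace}(M)$, otherwise it does not. If all $|A|\cdot m$ tests succeed, report that $E$ is consistent; otherwise report the first failing pair $(a,i)$. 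For the witness in the inconsistent case: the elimination already yields $u\in K$ with $u^{\mathsf T}(D_a c_{e_i}) \neq 0$; since $u\in K$ we have $\sum_q u_q = 0$ and $u^{\mathsf T} R_E = 0$. Scale $u$ so that $\|u\|_1 \le 2$ (harmless, as both relevant conditions are homogeneous in $u$), split $u = u^+ - u^-$ into positive and negative parts with $\|u^+\|_1 = \|u^-\|_1 = s \le 1$, and put $v_1 = u^+ + w$, $v_2 = u^- + w$ for $w = \tfrac{1-s}{n}\mathbf 1$. Then $v_1,v_2 \in \D(Q)$ and $v_1 - v_2 = u$, so $v_1 \equiv_E v_2$ while $\obs(v_1)(ae_i) - \obs(v_2)(ae_i) = u^{\mathsf T}(D_a c_{e_i}) \neq 0$, as required.

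\textbf{Main obstacle.} The conceptual crux is the passage from quantifying over the polytope $\D(Q)$ to quantifying over the linear space $H$, together with the affine bookkeeping it forces (this is exactly why the all-ones column must be adjoined to $M$); once this reduction is in place, everything else is routine. The one point requiring care for the complexity claim is to invoke the standard fact that Gaussian elimination over $\Q$ (and the iterated matrix--vector products used to build $R_E$) runs in time polynomial in the input bit-size, not merely in the number of field operations.
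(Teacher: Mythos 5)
Your proof is correct, but it takes a genuinely different route from the paper. The paper reduces the consistency check to linear programming: for each column $j$ (and each $a\in A$) it maximizes $\sum_i (t_i-s_i)(D_aM_E)_{i,j}$ subject to $t,s$ being stochastic and $(t-s)^{\mathsf T}M_E=0$, declaring inconsistency iff some objective attains a positive value; the paper even remarks that dropping the stochasticity requirement would turn the problem into a null-space containment decidable by Gaussian elimination, and it keeps the LP machinery precisely to handle that requirement. Your key observation is that stochasticity contributes nothing beyond the sum-zero condition: differences $v_1-v_2$ of elements of $\D(Q)$ fill out (up to positive scaling) the whole hyperplane $\{u:\mathbf 1^{\mathsf T}u=0\}$, and both conditions are homogeneous, so consistency is equivalent to the containment of the left null space of $[M_E\mid\mathbf 1]$ in that of each column $D_ac_{e_i}$, i.e.\ to a column-space membership test solvable by Gaussian elimination; your positive/negative-part construction then converts any violating $u$ back into genuine distributions $v_1,v_2$, yielding the required witness pair $(a,i)$. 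What each approach buys: the paper's LP formulation follows the definition verbatim and directly produces optimizing stochastic vectors, at the cost of invoking polynomial-time LP solvers (the paper cites an $O(n^4L)$ bound); your argument is more elementary and gives a purely linear-algebraic algorithm (with the standard caveat you note about controlling bit-size in exact Gaussian elimination), and it exposes the conceptual point—apparently missed in the paper's remark—that the convexity constraint reduces to a single extra all-ones column.
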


In \cref{sec:redpa}, we include further details on how to build a consistent table and a reduced automaton from it.  Although we currently only have the above result for probabilistic automata, in the future, it would be interesting to study how far it generalizes for other functors and monads, as well. 

\subsection{$\s$-linear Weighted Automata}\label{sec:wa}

Our second example of finitary automata is $\s$-linear weighted automata (WA).

\begin{definition}[Free Semimodule Monad]\label{semi}

Let $\Se_\s : \Set \to \Set$ denote finitely supported $\s$-combinations on a set $X$, defined as 
$\Se_\s(X) = \left\{ \phi:X \to \s \mid \sum \phi(x)\in \s \right\}$ where the support of each $\phi$ is finite.

\end{definition}

\begin{definition}[$\s$-linear weighted automaton (WA)]
    A \emph{$\s$-linear weighted automaton} over a semiring $\s$ is a tuple $(Q, <\delta, \out>)$, consisting of a set of states $Q$, a transition function $\delta \colon Q \to \Se_\s(Q)^A$, and an output function $\out \colon Q \to \s$.
\end{definition}

An $\s$-linear weighted automaton can be viewed as the coalgebra $F(X)=\s \times \Se_\s(X)^A$ on $\EM(\Se_\s)$, the category of $\s$-semimodules, where $\Se_\s(\mathbf{1})=\s$. Note $\Se_\s$ is a finitary monad. As we alluded to earlier, our algorithm does not currently work for all semirings $\s$, since finding a base of certain semirings is not always feasible. We explain this in further detail below. The goal is to reduce the number of states by finding a minimal subset $\overline{X}$ such that $span(X) = span(\overline{X})$ in $\SMod(\s)$, the span (over $\s$) of the language vectors for all $x\in X$. 

WAs differ from PAs, where the convex hull has a unique base (its extreme points). Finitely generated $\s$-semimodules do not have this property. First, they do not necessarily have a unique base, thus even if we can find a reduced automaton, it may not be unique in its set of states; i.e. there may exist two different reduced automata with different state spaces. This part is not a problem assuming all bases have the same cardinality, because if this is the case, different reduced automata would be isomorphic. The problem we can encounter is that there may exist different bases of an $\s$-semimodule of different cardinality. Therefore, there are some semirings where it may not currently be feasible to find a reduced automaton (with any known algorithms), even though we may still be able to find an automaton with a smaller state space.

 \paragraph*{Reverse determinization: from $\SMod$ to $\Set$.}

\begin{theorem}
An $\s$-semimodule $M$ is isomorphic to $\Se_\s(Y)$ as semimodules for a finite set $Y$ iff $M\cong \s^n$ for some $n\in \mathbb{N}$. 
\end{theorem}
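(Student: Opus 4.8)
The plan is to prove both directions of the iff by exploiting the fact that $\Se_\s$ is a \emph{free} construction, so that $\Se_\s(Y)$ for a finite set $Y$ is precisely the free $\s$-semimodule on $|Y|$ generators, which as a semimodule is exactly $\s^{|Y|}$. The forward direction is the easy one: if $M \cong \Se_\s(Y)$ with $Y$ finite, say $|Y| = n$, then unpacking \cref{semi} the underlying set of $\Se_\s(Y)$ consists of all finitely-supported functions $Y \to \s$, but since $Y$ is already finite every such function is automatically finitely supported, so $\Se_\s(Y)$ as a set is just $\s^Y \cong \s^n$; one then checks the semimodule operations (pointwise addition and scalar multiplication, which is how the free-semimodule algebra structure is defined) match those of the product semimodule $\s^n$, giving $M \cong \s^n$.

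For the converse, I would take $M \cong \s^n$ and produce the finite set $Y = \{1, \dots, n\}$, so that $\Se_\s(Y)$ again has underlying set $\s^n$ with pointwise operations; the isomorphism $M \cong \s^n = \Se_\s(Y)$ as semimodules is then immediate. The cleanest way to phrase this is via the universal property of the free semimodule: $\Se_\s$ is left adjoint to the forgetful functor $\overline U \colon \EM(\Se_\s) \to \Set$, so $\Se_\s(Y)$ is the free semimodule on $Y$, and the free semimodule on an $n$-element set is $\s^n$ with coordinatewise structure (the generators being the standard ``unit vectors'' $\eta(i)$). Since $M \cong \s^n$, we are done. Alternatively, and perhaps more in the spirit of the PA case and \cref{remark:base}, one can observe that $\s^n$ has the evident spanning set of $n$ unit vectors and that this exhibits $\s^n$ as $\Se_\s$ applied to that set.

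The main subtlety — and the one worth flagging explicitly in the proof — is that, unlike the convex-set case where ``simplex'' is a genuine geometric restriction, \emph{every} free semimodule on a finite set already \emph{is} of the form $\s^n$, so the condition ``$M \cong \s^n$ for some $n$'' is not an extra hypothesis beyond ``free and finitely generated'': it is just a restatement. Thus the theorem is essentially a tautological unfolding of definitions, and the real content lies elsewhere, namely in the warning (already discussed in the surrounding text) that a given finitely generated semimodule $M \cong \s^n$ may admit bases of differing cardinality, so the integer $n$ need not be an invariant of $M$ and the ``base condition'' of \cref{remark:base} can fail to be effectively decidable. I would therefore keep the proof short: forward direction by counting support (finite sets have all functions finitely supported), backward direction by taking $Y$ to be an $n$-element index set and invoking the freeness/universal property to identify $\Se_\s(Y)$ with $\s^n$ as semimodules. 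The only ``calculation'' is the routine verification that the free-algebra operations on $\Se_\s(Y)$ are coordinatewise, which follows directly from the formula for $\mu$ in \cref{ex:monad}.
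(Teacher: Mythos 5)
Your proof is correct and follows essentially the same route as the paper, whose entire argument is the observation that $\Se_\s(Y)$ is free and finitely generated and that such semimodules are exactly those of the form $\s^n$; your version just unpacks this identification (finite support is automatic for finite $Y$, operations are pointwise) more explicitly. Nothing further is needed.
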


\begin{proof}
$\Se_\s(Y)$ is a free and finitely generated $\s$-semimodule, and all free and finitely generated semimodules are of the form $\s^n$. 
\end{proof}

Therefore, given a coalgebra $S \to \s \times S^A$ for an $\s$-semimodule $S$, one can find a finite set $X$ and coalgebra $X \to \s \times \Se_s(X)^A$ in $\Set$ such that $\Se_\s(X) \cong S$ (in $\SMod$) if and only if $S \cong \s^n$ for some $n \in \N$.

\begin{remark}\label{remark:conditions} We must assume certain conditions on the semiring $\s$ in order to reduce. The first is we assume $\s$ is a commutative semiring. The second assumption is $k(\s)=1$, where $k(\s)=max\{t \in \mathbb{N} \mid$ the $\s$-semimodule $\s$ has a basis with t elements$\}$ (\cite{base}). Examples of semirings satisfying these conditions include vector spaces, Noetherian local semirings, $R_+$ for a ring $R\subset \mathbb{R}$ and skew fields.
\end{remark}

We revisit the diagram in \cref{eq:general} and obtain the following diagram in $\SMod$.

 \begin{equation}\label{diag}
 \xymatrix@C=.75cm@R=.2cm{
 X \ar[dd]_{<\out,\delta>}\ar[r]^{\eta_X}& \Se_\s (X)\ar[ddl]^-{<\out^\sharp,\delta^\sharp>}  \ar@{->>}[r]|-{\, e\, } \ar@/^{3ex}/@{-->}[rrr]|{\ \ \sem -\ \ }& I \cong \Se_\s(Y)\ar@{>->}[dr]\ar[dd] \ar@{>->}[rr]^-i&&\s^{A^*}\ar[dd]_\cong\\
 &&&\vartheta F\ar@{>->}[ur]\ar[d]\\
 \s\times \Se_\s(X)^A \ar@{->}[rr] &&\s\times I^A\ar@{>->}[r]&\s\times (\vartheta F)^A\ar@{>->}[r]& \s\times  {(\s^{A^*})}^A \
 }
\end{equation}

To relate $\Se_\s(X)$ with its reduced state space, we need to find a base of $\Se_\s(X)$. We utilize a span approach as follows.

Choose any $x\in X$ and check if $x\in \mathbf{span}(X \backslash \{x\})$. If $x$ is in the span, remove $x$ from $X$. If $x$ is not in the span, keep $x$ in $X$. Continue in this way until we do this for all $x\in X$. Since we assume $\s$ is commutative and $k(\s)=1$, all bases have the same cardinality and thus the order in which we check each $x\in X$ does not matter. This algorithm will find a base of $\mathbf{span}(X)$, the span of the language acceptance vectors for $x\in X$. By \Cref{sec:minimality}, the image of $\D(X)$ in $\vartheta F$ in \cref{eq:diag} will be equivalent to $\Se_\s(\overline X)$, where $\overline X$ is the base.

\begin{remark}If a semiring $\s$ does not satisfy commutativity and $k(\s) \neq 1$, there is no guarantee we will obtain a reduced automata using this approach. We may still be able to obtain a smaller automata, but it may not be a reduced one. Recall from the last section that we did not utilize a span approach to find a base for probabilistic automata, so there may exist other algorithms for a semiring $\s$ where $k(\s) \ge1$ that can find a minimal size base. Inserting such an algorithm would allow for reduction of the WA.
\end{remark}

\section{Discussion and Future Work}\label{sec:future}

In this paper, we studied {\em reduced finitary automata}, which are finitary automata with no redundant states. Two well-known examples of finitary automata are probabilistic automata and $\s$-linear weighted automata for a large class of semirings $\s$. Reduced automata are not necessarily minimal (see e.g. \Cref{ex:redvsminimal,ex:square}) as the reduction process does not take into account a particular initial state. Because of this, we end up with an automaton that is not minimal for a particular initial state but instead can then be used to look at different initial states -- this of course comes at the cost of having extra states, yet with no redundancy. 

Reduced finitary automata fit naturally in the coalgebraic framework, where it is common to ignore the initial state. The coalgebraic outlook gave us valuable guidance on the algebraic structures involved in the reduction process, justifying when we could go back to a set-based automaton. We strongly believe this approach will guide us in other generalizations, e.g to work with coalgebras on $\EM(\M)$, where $\M$ is not over $\Set$.

Our work also opens a door to a potential relationship between reduction and properness. Recall that a weighted automaton for a semiring $\s$ is reducible using the span approach discussed in \cref{sec:wa} iff $k(\s)=1$. We conjecture that if weighted automata over the semiring $\s$ are reducible (using the span approach), then $\s$ is a proper semiring. From \cite{proper}, to prove properness of some cubic functors, the authors create a zig-zag with the middle node $Z=<(c_{1_w}(x), c_{2_w}(y)>_{w\in A^*}$, a subsemimodule of $\s^{n_1} \times \s^{n_2}$, where $x\in \s^{n_1}$ and $y\in \s^{n_2}$ are assumed to be trace equivalent. If $Z$ is always finitely generated for any $x$ and $y$ trace equivalent, it then follows that the semiring $\s$ is proper. We would like to show for reducible semirings $\s$, $Z$ is finitely generated.


We envisage that the above problem can be tackled as follows. The accepted language of a WA over $\s$ can be viewed as a subset of $\s << A^*>>$, the set of formal power series. In \cite{residual}, there is a connection between residual languages and finitely generated subsemimodules of $\s<<A^*>>$, although only the semirings $\mathbb{R}, \mathbb{Q}, \mathbb{R}_+,$ and $\mathbb{Q}_+$ are considered; however, we note these four semirings are known to be proper. If the subsemimodule of $\s<<A^*>>$ representing the accepted language of a WA is finitely generated, then only a finite amount of words $\{w_i\}_i \subsetneq A^*$ need to be checked to understand how the automata behaves on all of $A^*$. If this holds, this would then show for $\s$-linear weighted automata $\mathcal{A}$ and $\mathcal{B}$ where states $x\in \mathcal{A}$ and $y\in \mathcal{B}$ are trace equivalent, $Z$ is generated by $m_1 \cdot m_2$ elements, where $m_1$ is the number of words that need to be checked for $\mathcal{A}$ and $m_2$ is the number of words that need to be checked for $\mathcal{B}$. Using results from   \cite{residual,shortwordpa} we want to try to show for any $\s$-linear weighted automata with $k(\s)=1$, the subsemimodule of $\s<<A^*>>$ representing the accepted language of the automaton is finitely generated. If this can be done, this would show that if  weighted automata over $\s$ are reducible, then $\s$ is proper.


Furthermore, the idea of proper semirings was extended to proper functors in \cite{fixpoint}. In \cite{proper}, it was proven that the functor $F(X)=[0,1]\times X^A$ is a proper functor on $\mathbf{PCA}$. This leads us to one of our foremost goals: to determine whether the functor $F(X)=[0,1] \times X^A$ on $\Conv$ is proper. We have evidence to support this is true, but we leave this as future work.

\bibliography{bib}

\appendix

\section{Algorithm to compute extreme points}\label{sec:algo}

The following algorithm from \cite{alg} begins with a matrix $A$ with columns as points in $\mathbb{R}^m$. It determines the frame of the conical hull of these points. The matrix $A$ is inputted into this algorithm, and the algorithm will produce a matrix that will contain columns which are a subset of the columns of $A$, which give the frame of the conical hull.

\medskip
\noindent\textbf{Algorithm.}
\begin{enumerate}[i]
\item (Canonical form) Use Jordan elimination for bringing $A$ into canonical form. Every column is labeled undecided.
\item (Constant column) Select a nonbasic undecided column $A_c$ as "constant column." If there are no such columns, go to (viii); else proceed to (iii).
\item (Pilot row) If $A_c \geq 0$, delete $A_c$ ad return to (ii). Else select $_pA$ such that $a_{pc} \lt 0$; call it the "pilot row."
\item (Pivot column) If $a_{pc}$ is the only negative entry in the pilot row, change the label of $A_c$ from "undecided" to "necessary" and go to (ii). Else select a "pivot column" $A_l$ such that $a_{pl} \lt 0$ and $l\neq c$.
\item (Pivot row) Select a "pivot row" $_rA$ such that
$$ 0 \leq \frac{a_{rc}}{a_{rl}} \leq \min\left\{\frac{a_{ic}}{a_{il}} | a_{ic} \geq 0, a_{il} \gt 0 \right\}$$
\item (Pivoting) Execute a simplex step (=Jordan transformation) with $a_{rl}$ as pivot. In the absence of degeneracies, this will increase the old entry $a_{pc}$ while keeping nonnegative entries of $A_c$ nonnegative.
\item (Return) If the new entry $a_{pc}$ is still negative, keep the $p^{th}$ row as pilot row and go to (iv). Else go to (iii).
\item (Termination) If all basic columns are decided, terminate the procedure. Else select an undecided basic column $A_c$.
\item (Clear basis) Suppose $a_{rc} = 1$. If this is the only positive entry in $_rA$, then change the label of the column $A_c$ from "undecided" to "necessary" and go to (viii). Else pivot so as to remove the undecided column $A_c$ from the basis. Go to (iii) with $A_c$ as constant column. (End of the algorithm).
\end{enumerate}

\subsection{Example Application}\label{ap:ex_conv}

Take the points $\hat{A}_1=(2,2), \hat{A}_2=(4,2), \hat{A}_3=(2,4), \hat{A}_4=(4,4), \hat{A}_5=(3,3), \hat{A}_6=(4,3)$.

We perform the algorithm above on the matrix

\begin{center}
$A=\begin{bmatrix}
2 & 4 & 2 & 4 & 3 & 4\\
2 & 2 & 4 & 4 & 3 & 3\\
1 & 1 & 1 & 1 & 1 & 1
\end{bmatrix}$
\end{center}

(i) Jordan elimination brings $A$ into canonical form.

\begin{center}
$A=\begin{bmatrix}
1 & 0 & 0 & -1 & 0 & -1/2\\
0 & 1 & 0 & 1 & 1/2 & 1\\
0 & 0 & 1 & 1 & 1/2 & 1/2
\end{bmatrix}$
\end{center}

(ii) Choose $A_4$ non-basic undecided column. Call $A_4$ the "constant column".

(iii) $A_4$ is not non-negative, thus select row $_1A$ since $a_{14} \lt 0$ and call $_1A$ the "pivot row".

(iv) $a_{14}$ is not the only negative entry in $_1A$ since $a_{16} \lt 0$. Thus, $A_6$ is a "pivot column".

(v) $a_{24}, a_{34} \geq 0$ and $a_{26}, a_{36} \gt 0$. The minimum of the set is $1$, so select $_2A$ to be the "pivot row".

(vi) $a_{26}$ is the pivot point. We execute a Jordan elimination step with $a_{26}$ as the pivot. This gives

\begin{center}
$A=\begin{bmatrix}
1 & 1/2 & 0 & -1/2 & 1/4 & 0\\
0 & 1 & 0 & 1 & 1/2 & 1\\
0 & 1/2 & 1 & 1/2 & 1/4 & 0
\end{bmatrix}$
\end{center}

(vii) $a_{14}$ is still negative so keep the first row $_1A$ as the pilot row and go to (iv).

(iv) $a_{14}$ is the only negative entry in the pilot row so change $A_4$ from undecided to necessary. Thus, $(4,4)$ is a vertex. Go to (ii).

(ii) Select $A_5$ as "constant column". $A_5 \geq 0$ so delete $A_5$.

\begin{center}
$A=\begin{bmatrix}
1 & 0 & 0 & -1 & -1/2\\
0 & 1 & 0 & 1 & 1\\
0 & 0 & 1 & 1 & 1/2
\end{bmatrix}$
\end{center}

Go to (ii).

(ii). Select $A_6$. 

(iii) $A_6$ is not non-negative, so select $_1A$ since $a_{16} \lt 0$ and $_1A$ become the "pilot row". $a_{16}$ is not the only non-negative entry in $_1A$ since $a_{14} \lt 0$. Select "pivot column" $A_4$. 

(v) The minimum of the set, where $l=4, p=1,$ and $c=6$, is $1/2$. Thus, choose "pivot row" $_3A$. 

(vi) Executing a simplex step with $a_{34}$ as pivot gives

\begin{center}
$A=\begin{bmatrix}
1 & 0 & 1 & 0 & 0\\
0 & 1 & -1 & 0 & 1/2\\
0 & 0 & 1 & 1 & 1/2
\end{bmatrix}$
\end{center}

(vii) $a_{16}$ is no longer negative. Go to (iii).

(iii) $A_6 \geq 0$ so delete $A_6$. 

\begin{center}
$A=\begin{bmatrix}
1 & 0 & 0 & -1\\
0 & 1 & 0 & 1\\
0 & 0 & 1 & 1
\end{bmatrix}$
\end{center}

Go to (ii).

(ii) No more undecided nonbasis columns. Go to (viii).

(viii) Select $A_1$. 

(ix) $a_{11}=1$ and is the only positive entry in row 1. Thus, $A_1$ is necessary and $(2,2)$ is a vertex.

(viii) Select $A_2$. 

(ix) $a_{22}$ is not the only positive entry in row 2. Pivot as to remove the undecided column $A_2$ from the basis. 

\begin{center}
$A=\begin{bmatrix}
1 & 0 & 0 & 1\\
0 & 1 & 0 & -1\\
0 & 0 & 1 & 1
\end{bmatrix}$
\end{center}

where we moved $A_2$ to the last column to put the basis elements together.

(ii) $A_2$ is a "constant column".

(iii) $A_2$ is not non-negative so select $_2A$ since $a_{24} \lt 0$ as the "pilot row".

(iv) $a_{24}$ is the only negative entry in row 2 so $A_2$ becomes necessary, and $(4,2)$ is a vertex.

(viii) Choose $A_3$.

(ix) $a_{33}=1$, but is not the only positive entry in row 3 because $a_{34} \gt 0$.  Pivot as to remove the undecided column $A_3$ from the basis.

\begin{center}
$A=\begin{bmatrix}
1 & 0 & 0 & 1\\
0 & 1 & 0 & -1\\
0 & 0 & 1 & 1
\end{bmatrix}$
\end{center}

Go to (iii)

(iii) $A_3$ is not non-negative so select $_2A$ since $a_{24} \lt 0$ as the "pilot row".

(iv) $a_{24}$ is the only negative entry in row 2, thus, $A_3$ becomes necessary, and $(2,4)$ is a vertex.

\section{Extra \Cref{datastructure}}

\begin{lemma}
If $E \subseteq A^*$ is consistent, then $AE$ is consistent.
\end{lemma}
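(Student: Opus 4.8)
The plan is to reduce consistency of $AE$ to consistency of $E$ through a \emph{language-derivative} characterisation of the equivalences $\equiv_F$. The central observation is that, since $\obs = \sem{-}$ is the unique homomorphism into the final coalgebra (\Cref{thm:final}), it commutes with derivatives: for every $v \in \D(Q)$, every $a \in A$, and every $w \in A^*$ we have $\obs(v)(aw) = \obs(\delta_a^\sharp(v))(w)$, where $\delta_a^\sharp \colon \D(Q) \to \D(Q)$ is the $a$-component of the lifted transition map $\delta^\sharp$. First I would record this identity --- it is exactly the second component of the commuting square defining $\sem{-}$, using the description $z(\varphi) = \langle \varphi(\eword), \lambda a.\lambda w.\varphi(aw)\rangle$ of the final coalgebra structure --- together with the fact that $\overline{\row}_F(v)(w) = \obs(v)(w)$ for every $w \in F$. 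The latter holds because $\obs$ post-composed with the restriction $[0,1]^{A^*} \to [0,1]^F$ is a homomorphism $\D(Q) \to [0,1]^F$ agreeing with $\row_F$ on the generators $Q$, hence it coincides with the unique such extension $\overline{\row}_F$.

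From these two facts I would derive the key claim: for every $F \subseteq A^*$ and all $v_1, v_2 \in \D(Q)$,
\[
 v_1 \equiv_{AF} v_2 \quad\Longleftrightarrow\quad \delta_a^\sharp(v_1) \equiv_F \delta_a^\sharp(v_2) \text{ for all } a \in A.
\]
Indeed, $v_1 \equiv_{AF} v_2$ unfolds to $\obs(v_1)(ae) = \obs(v_2)(ae)$ for all $a \in A$ and $e \in F$, which by the derivative identity is precisely $\obs(\delta_a^\sharp(v_1))(e) = \obs(\delta_a^\sharp(v_2))(e)$ for all such $a,e$, i.e. $\delta_a^\sharp(v_1) \equiv_F \delta_a^\sharp(v_2)$ for every $a \in A$.

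With this in hand the lemma is immediate. Assume $E$ is consistent and let $v_1, v_2 \in \D(Q)$ with $v_1 \equiv_{AE} v_2$; we must show $v_1 \equiv_{A(AE)} v_2$. Applying the characterisation with $F = E$ gives $\delta_a^\sharp(v_1) \equiv_E \delta_a^\sharp(v_2)$ for every $a \in A$. Since $\delta_a^\sharp(v_1), \delta_a^\sharp(v_2) \in \D(Q)$ and $E$ is consistent, we obtain $\delta_a^\sharp(v_1) \equiv_{AE} \delta_a^\sharp(v_2)$ for every $a \in A$. Applying the characterisation again, now with $F = AE$, yields $v_1 \equiv_{A(AE)} v_2$, as required.

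The only real work is the derivative bookkeeping in the first paragraph; once the equivalence $v_1 \equiv_{AF} v_2 \Leftrightarrow (\forall a \in A.\ \delta_a^\sharp(v_1) \equiv_F \delta_a^\sharp(v_2))$ is established, consistency of $AE$ follows from consistency of $E$ by a single ``peel off the leading letter'' step. I expect no genuine obstacle here, and I note the argument never uses any special feature of $\D$: the same proof works for the observation tables of an arbitrary finitary monad $\T$.
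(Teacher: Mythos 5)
Your proof is correct and is essentially the paper's own argument: the paper likewise peels off the leading letter via the lifted transition map $\overline{\delta} = \delta^\sharp$, using the identity $\overline{\row}_{AF}(v)(ae) = \overline{\row}_F(\overline{\delta}(v)(a))(e)$ once with $F=E$ to invoke consistency of $E$ and once with $F=AE$ to conclude $v_1 \equiv_{A^2E} v_2$. The only difference is presentational: you state that identity as a standalone biconditional and justify it explicitly from the final-coalgebra structure, whereas the paper uses the two instances inline without comment.
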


\begin{proof}
Let $\overline{\delta} \colon \D(Q) \to \D(Q)^A$ be the unique homomorphism extending $\delta$. Suppose $v_1, v_2 \in \D(Q)$ with $v_1 \equiv_{AE} v_2$. For each $a \in A$ and $e \in E$, we have
\[ \overline{\row}_E(\overline{\delta}(v_1)(a))(e) = \overline{\row}_{AE}(v_1)(ae) = \overline{\row}_{AE}(v_2)(ae) = \overline{\row}_E(\overline{\delta}(v_2)(a))(e), \]
hence $\overline{\delta}(v_1)(a) \equiv_{E} \overline{\delta}(v_2)(a)$. Then consistency of $E$ yields $\overline{\delta}(v_1)(a) \equiv_{AE} \overline{\delta}(v_2)(a)$ for each $a \in A$. In particular, for $w \in AE$, we have
\[ \overline{\row}_{A^2E}(v_1)(aw) = \overline{\row}_{AE}(\overline{\delta}(v_1)(a))(w) = \overline{\row}_{AE}(\overline{\delta}(v_2)(a))(w) = \overline{\row}_{A^2E}(v_2)(aw), \]
so $v_1 \equiv_{A^2E} v_2$ as needed.
\end{proof}

\begin{lemma}\label{lemma:consistent}
If $E \subseteq A^*$ is consistent and $v_1,v_2 \in \D(X)$ satisfy $v_1 \equiv_E v_2$, then for each $k \in \N$ we have $v_1 \equiv_{A^kE} v_2$.
\end{lemma}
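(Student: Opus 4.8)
\textbf{Proof proposal for \Cref{lemma:consistent}.}

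The plan is to proceed by induction on $k \in \N$, using the preceding lemma (consistency of $E$ implies consistency of $AE$) as the inductive engine. For the base case $k = 0$, we have $A^0 E = E$ by definition, so the hypothesis $v_1 \equiv_E v_2$ is literally the conclusion $v_1 \equiv_{A^0 E} v_2$; nothing to prove. For the base case $k = 1$, the conclusion $v_1 \equiv_{AE} v_2$ is exactly what consistency of $E$ gives us directly from $v_1 \equiv_E v_2$.

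For the inductive step, suppose the statement holds for some $k \geq 1$: that is, for any consistent set $E'$ and any $v_1, v_2 \in \D(X)$ with $v_1 \equiv_{E'} v_2$, we have $v_1 \equiv_{A^k E'} v_2$. Now let $E$ be consistent and $v_1 \equiv_E v_2$. By the preceding lemma, $AE$ is consistent. By consistency of $E$ applied to $v_1 \equiv_E v_2$, we get $v_1 \equiv_{AE} v_2$. Applying the induction hypothesis with $E' \eqdef AE$ (which is consistent) to the pair satisfying $v_1 \equiv_{AE} v_2$ yields $v_1 \equiv_{A^k(AE)} v_2$. Finally, observe that $A^k(AE) = A^{k+1}E$ by the inductive definition of the iterated prefixing operation (one checks $A^{k}(AE) = \{w'w e \mid w' \in A^k, w \in A, e \in E\} = \{ue \mid u \in A^{k+1}, e\in E\} = A^{k+1}E$), so $v_1 \equiv_{A^{k+1}E} v_2$, completing the induction.

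The only mild subtlety — and the step most worth stating carefully — is that the induction must be phrased so that the hypothesis quantifies over \emph{all} consistent sets, not just the fixed $E$ we started with, since the step feeds $AE$ (rather than $E$) back into the hypothesis; this is why invoking the preceding lemma to know $AE$ is again consistent is essential. Everything else is bookkeeping with the definition of $A^k E$ and the fact that $\equiv_{-}$ is monotone and compatible with these set operations. There is no real obstacle here; the lemma is a clean two-line induction once the preceding lemma is in hand.
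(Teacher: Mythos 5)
Your proposal is correct and matches the paper's argument, which is exactly a "straightforward induction on $k$ using the previous lemma" (consistency of $E$ implies consistency of $AE$). Your extra care in strengthening the induction hypothesis to range over all consistent sets (or, equivalently, carrying along consistency of $A^kE$ in the induction) and checking $A^k(AE)=A^{k+1}E$ is just the bookkeeping the paper leaves implicit.
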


\begin{proof}
Straightforward induction on $k$ using the previous lemma.
\end{proof}

\subsection{Reduction algorithm for probabilistic automata}\label{sec:redpa}
It is convenient for computational purposes to represent observation tables and transition functions as real-valued matrices.

\begin{definition}\label{def:matrices}
When $Q = \{q_1,q_2,\dots,q_n\}$ and $E = \{e_1,e_2,\dots,e_m\}$ with $n,m \in \N$, we may view the observation table corresponding to $E$ as the real-valued $n$-by-$m$ matrix $M_E \in [0,1]^{n \times m}$ given by $(M_E)_{i,j} \eqdef \row_E(q_i)(e_j)$ for $1 \leq i \leq n$ and $1 \leq j \leq m$. Similarly, we may view $\delta$ as the family of matrices $\{ D_a \}_{a \in A}$ such that each $D_a \in [0,1]^{n \times n}$ is given by $(D_a)_{i,j} \eqdef \delta(q_i)(a)(q_j)$ for $1 \leq i,j \leq n$.
\end{definition}

\begin{definition}
For each $n \in \mathbb{N}$, we say that a vector $t \in [0,1]^n$ is \emph{stochastic} when $\sum_{i=1}^n t_i = 1$. When $|Q| = n$, the set of stochastic vectors in $[0,1]^n$ is isomorphic to $\D(Q)$.
\end{definition}

Assume $|Q| = n$ and $|E| = m$ with $n,m \in \N$. Observe that the image of $\overline{\row}_E$ coincides with the set of row vectors $\{ t^T M_E \mid t \in [0,1]^n \ \text{is stochastic} \}$. Furthermore, observe that for each $a \in A$, the matrix product $D_a M_E$ is the observation table for the set of words $\{ae \mid e \in E\}$. It follows that $E$ is inconsistent precisely when there exists $a \in A$ and stochastic vectors $t,s \in [0,1]^n$ such that $t^T M_E = s^T M_E$ but $t^T D_a M_E \neq s^T D_a M_E$. This motivates the following algorithm, which takes as input the following arguments
\begin{itemize}
    \item An observation table $M \in \R^{n \times m}$
    \item A transition matrix $D \in \R^{n \times n}$
\end{itemize}
and decides whether or not the pair $(M,D)$ is consistent; that is, whether or not for all stochastic vectors $t,s \in [0,1]^n$, we have \begin{equation}\label{eq:cons} t^T M = s^T M  \Rightarrow t^T D M = s^T D M. \end{equation}

\begin{remark}[Null space] If $t$ and $s$ were not required to be stochastic, consistency would merely amount to $\text{null}(M^T) \subseteq \text{null}(M^TD^T)$. This can be decided in polynomial time by using Gaussian elimination to find a basis for $\text{null}(M^T)$. \end{remark}

\noindent{\bf Consistency check for $(M,D)$.} To solve \Cref{eq:cons} we use linear programming, and determine whether or not there exist $t,s \in \R^n$ satisfying the following linear constraints:
$$t_i, s_i \geq 0\text{ for }1 \leq i \leq n \qquad\quad
    \sum_{i = 1}^n t_i = 1 = \sum_{i = 1}^n s_i \qquad\quad
 \sum_{i=1}^n (t_i - s_i) M_{i,j} = 0\text{ for }1 \leq j \leq m$$
and such that $t^T D M \neq s^T D M$. The first two sets of constraints state that $t$ and $s$ are stochastic. The third constraint states that $(t - s)^T M = 0$. Note that $t^T D M \neq s^T D M$ precisely when, for some $1 \leq j \leq m$, the $j$'th component of $(t - s)^T D M$ is nonzero. Since $t$ and $s$ are interchangeable, we may assume this component is positive. For fixed $j$, this is captured by the linear expression
\[ \sum_{i = 1}^n (t_i - s_i) (D M)_{i,j} \gt 0. \]
For each $j$, we can maximize the objective function $\sum_{i = 1}^n (t_i - s_i) (D M)_{i,j}$ subject to the above linear constraints by solving the corresponding linear program. By construction, $(M,D)$ is inconsistent iff at least one of the $m$ objective functions attains a positive value. It is clear that each program can be constructed in polynomial time. Since linear programs can be solved in polynomial time (depending on the algorithm chosen this will be in $O(n^4L)$), the consistency of $(M,D)$ can be decided in polynomial time.

\medskip

\noindent{\bf Building a consistent table.} We now can build the reduced automaton by incrementally constructing $E$ and then checking if the corresponding observation table is consistent. This leads to iteratively discovering the (state space) polytope dimension. If we consider the automaton from \Cref{fig:tetra} we end up with 3 tables for $E=\{\varepsilon\}$, $E=\{\varepsilon, a\}$ and $E=\{\varepsilon, a, aa\}$ (see \Cref{fig:tetra} for the last table, which is {\em consistent},  the other two can be read as subtables thereof by restricting the columns), which can be depicted as follows: 
\begin{equation}\label{eq:tetra}
 \begin{tikzpicture}[baseline=(current bounding box.south east), scale=1.7,font=\scriptsize]
        \coordinate (left) at (0,0);
        \coordinate (mid) at (0.5,0);
        \coordinate (right) at (1,0);
        
        \draw[fill=black] (left) circle (0.1em)
            node[below] {$q_4$};
        \draw[fill=black] (mid) circle (0.1em)
            node[below] {$q_1,q_2,q_5$};
        \draw[fill=black] (right) circle (0.1em)
            node[below] {$q_3$};
            
        \draw[->] (0,0) -- (1.5,0) node[right] {$\eword$};
    \end{tikzpicture}
     \qquad\quad
    \begin{tikzpicture}[baseline=(current bounding box.center), scale=1.6,font=\scriptsize]
        \coordinate (q1) at (0.5,0.5);
        \coordinate (q2) at (0.5,1);
        \coordinate (q3) at (1,0);
        \coordinate (q4) at (0,0);
        \coordinate (q5) at (0.5,0.33);
        
        \draw[fill=black] (q1) circle (0.1em)
            node[above] {$q_1$};
        \draw[fill=black] (q2) circle (0.1em)
            node[above] {$q_2$};
        \draw[fill=black] (q3) circle (0.1em)
            node[below right] {$q_3$};
        \draw[fill=black] (q4) circle (0.1em)
            node[below left] {$q_4$};
        \draw[fill=black] (q5) circle (0.1em)
            node[below] {$q_5$};
            
        \draw[->] (0,0) -- (1.5,0) node[right] {$\eword$};
        \draw[->] (0,0) -- (0,1.5) node[above] {$a$};
        \draw[-] (q4) -- (q2);
        \draw[-] (q2) -- (q3);
    \end{tikzpicture}
    \qquad\quad
    \begin{tikzpicture}[baseline=(current bounding box.center), scale=1.6,font=\scriptsize]
        \pgfmathsetmacro{\factor}{1/sqrt(2)};
        \coordinate (q1) at (0.5,1,0.5*\factor);
        \coordinate (q2) at (0.5,0,1*\factor);
        \coordinate (q3) at (1,0,0);
        \coordinate (q4) at (0,0,0);
        \coordinate (q5) at (0.5,0,0.33*\factor);
        
        \foreach \i in {2,3,4}
            \draw[dashed] (q\i)--(q5);
        
        \draw[-, fill=gray!10, opacity=.5] (q2)--(q3)--(q4)--cycle;
        \draw[-, fill=gray!30, opacity=.5] (q1)--(q2)--(q3)--cycle;
        \draw[-, fill=gray!50, opacity=.5] (q1)--(q2)--(q4)--cycle;
        
        \draw[fill=black] (q1) circle (0.1em)
            node[above] {$q_1$};
        \draw[fill=black] (q2) circle (0.1em)
            node[below] {$q_2$};
        \draw[fill=black] (q3) circle (0.1em)
            node[below right] {$q_3$};
        \draw[fill=black] (q4) circle (0.1em)
            node[left] {$q_4$};
        \draw[fill=black] (q5) circle (0.1em)
            node[above right] {$q_5$};
        
        \draw[->] (0,0) -- (1.5,0,0) node[right] {$\eword$};
        \draw[->] (0,0) -- (0,1.5,0) node[above] {$aa$};
        \draw[->] (0,0) -- (0,0,1.5*\factor) node[below left] {$a$};
    \end{tikzpicture}
    \end{equation}
   
\noindent{\bf Building the reduced automaton from a consistent table.} 
Given the extreme points $r_1, r_2, \ldots, r_n \in [0, 1]^E$ of the convex combinations of the rows of a consistent table, which we can calculate using the algorithm of 
\Cref{sec:algo}, we know these extreme points are included in the rows of the table, since the convex algebra is generated from those rows.

Thus, there are $q_1, q_2, \ldots q_n \in Q$ such that $\mathsf{row}(q_i) = r_i$ for $1 \le i \le n$.
The reduced automaton can now be built as follows:
Let $Q' = \{q_1, q_2, \ldots, q_n\}$ be its state space.
For any state $q \in Q$ we can determine a convex combination of states in $\D(Q')$ accepting the same language as $q$, by looking at the rows corresponding to $q$ and the states in $Q'$. This allows us to restrict the transition functions from $Q$ to $Q'$.
The output function can also be restricted to $Q'$.

For example, we can see that the table corresponding to the tetrahedron in \Cref{eq:tetra} (see also \Cref{fig:tetra}) will be consistent and from that we can obtain the reduced automaton with states $\{q_1, q_2, q_3, q_4\}$ (by computing the extreme points using the algorithm from \Cref{sec:algo}). The transition structure will be obtained by removing $q_5$ and it would replace its incoming  transitions with the appropriate convex combination (the row for $q_5$ is a convex combination of the rows for $q_2$, $q_3$, and $q_4$ all weighted by $\frac 1 3$). Since $q_5$ does not have incoming transitions the resulting automaton is: 

\smallskip
   \begin{tikzpicture}[baseline=(current bounding box.center), node distance = 1.3cm, arrows=->,font=\footnotesize]
        \tikzset{every state/.style={minimum size=0pt}}
        \node[state, accepting below, accepting text = {$1/2$}] (q1) {$q_1$};
        \node[state, right of=q1, accepting below, accepting text = {$1/2$}] (q2) {$q_2$};
        \node[state, right of=q2, accepting below, accepting text = {$1$}] (q3) {$q_3$};
        \node[state, right of=q3, accepting below, accepting text = {$0$}] (q4) {$q_4$};
        \draw (q1) edge[above] node{$1$} (q2)
              (q2) edge[above] node{$1$} (q3)
              (q3) edge[above] node{$1$} (q4)
              (q4) edge[loop right] node{$1$} (q4);
    \end{tikzpicture}

\smallskip

   \medskip
\noindent{\bf Note on termination and correctness-} The above iterative process will terminate because at each step we are increasing the size of the potential basis: in the worst case scenario, the automaton we started with is already reduced and we will discover that the extreme points include all the states. The guarantee of progression towards the correct observation table is captured by the following theorem: 
\begin{theorem}[Quotient Progress Property]\label{thm:progress} Assume there exists a surjective convex map between convex polytopes $f : C_1 \twoheadrightarrow C_2$ that is not an isomorphism. Then $C_2$ has strictly lower dimension than $C_1$.
\end{theorem}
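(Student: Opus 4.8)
The plan is to show that \emph{every} surjective convex map $f : C_1 \twoheadrightarrow C_2$ of convex polytopes satisfies $\dim C_2 \le \dim C_1$, and that equality forces $f$ to be an isomorphism in $\Conv$; contraposing the equality case then yields the theorem.

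First I would pass from convex to affine combinations. Since $C_1$ and $C_2$ sit inside real coordinate spaces, the result of \cite{poly} quoted above says $f$ preserves affine combinations, and hence $f$ is the restriction of a (unique) affine map $\hat f \colon \aff(C_1) \to \aff(C_2)$ between the affine hulls. Because $f$ is onto, $\hat f(\aff(C_1)) = \aff(f(C_1)) = \aff(C_2)$, so $\hat f$ is a \emph{surjective} affine map. Recalling that the dimension of a convex set is the dimension of its affine hull, and that the linear part of a surjective affine map between finite-dimensional affine spaces is itself surjective, this already gives $\dim C_2 \le \dim C_1$.

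Next I would treat the equality case. Suppose $\dim C_2 = \dim C_1$. Then the linear part of $\hat f$ is a surjective endomorphism of a finite-dimensional vector space, hence a linear isomorphism, so $\hat f$ is an affine isomorphism $\aff(C_1) \xrightarrow{\ \cong\ } \aff(C_2)$; in particular it is injective. Thus $f = \hat f|_{C_1}$ is injective, and being also surjective it is a bijection $C_1 \to C_2$ whose inverse is $\hat f^{-1}|_{C_2}$, an affine and hence convex map. Therefore $f$ is an isomorphism in $\Conv$. Contrapositively: if $f$ is not an isomorphism, then $\dim C_2 \neq \dim C_1$, which together with $\dim C_2 \le \dim C_1$ forces $\dim C_2 < \dim C_1$.

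The step requiring the most care is the first one: making rigorous that a convex map between convex subsets of real coordinate spaces genuinely extends to an affine map over the \emph{entire} affine hull (including points obtained as affine combinations with negative coefficients), and that the image of this extension is exactly $\aff(C_2)$. This is precisely where the cited equivalence of convexity- and affinity-preservation does the work, combined with the standard fact that an affine-combination-preserving map on a spanning set extends uniquely over its affine span. One should also confirm that "isomorphism" is meant here in the categorical sense (a convex bijection whose inverse is convex) — which is exactly what the equality case delivers.
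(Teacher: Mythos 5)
Your proposal is correct. It runs on the same engine as the paper's proof---extending the convex surjection to a surjective affine map $\overline{f}\colon \aff(C_1)\to\aff(C_2)$ between the affine hulls, via the quoted equivalence of convexity- and affinity-preservation---but the logical decomposition is different. The paper argues directly: it picks an affine basis $c_1,\dots,c_k$ of $\aff(C_1)$ from the base of $C_1$, uses non-injectivity of $f$ to get two distinct affine representations with the same image under $\overline{f}$, and concludes that $\overline{f}(c_1),\dots,\overline{f}(c_k)$ are affinely dependent, so they affinely span a space of dimension at most $k-2 < \dim C_1$. You instead prove the weak inequality $\dim C_2 \le \dim C_1$ from surjectivity of the linear part of $\overline{f}$, and then a rigidity statement---equal dimensions force the linear part to be bijective, hence $\overline{f}$ and $f$ injective, with convex inverse $\overline{f}^{-1}|_{C_2}$---and contrapose. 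The two arguments are linear-algebraically equivalent, but your route has a small advantage in completeness: the paper passes from ``not an isomorphism'' straight to ``there exist distinct $v_1,v_2$ with $f(v_1)=f(v_2)$'', which implicitly uses the (true but unstated) fact that a convex bijection between convex sets is an isomorphism in $\Conv$; your equality case constructs the convex inverse explicitly, so no such side fact is needed. Conversely, the paper's direct dependency argument is marginally shorter, avoiding the rank argument for affine maps.
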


\begin{proof}[Proof of \Cref{thm:progress}]
Let $\{c_1,\dots,c_n\}$ be the unique base of $C_1$ ordered such that $\{c_1,\dots,c_k\}$ is an affinely-independent subset of maximal size $k \leq n$. In particular, $\{c_1,\dots,c_k\}$ is an affine basis of $\aff(C_1)$. Recall that $f$ extends to a unique surjective affine map $\overline{f}: \aff(C_1) \rightarrow \aff(C_2)$. It suffices to show that $\{\overline{f}(c_1), \dots, \overline{f}(c_k)\}$ is affinely-dependent. By assumption, there are distinct points $v_1, v_2 \in C_1$ with $f(v_1) = f(v_2)$. Since $C_1 \subseteq \aff(C_1)$, there are distinct, uniquely-determined vectors $\alpha,\beta \in \R^k$ such that $\sum_{i=1}^k \alpha_i = 1 = \sum_{i=1}^k \beta_i$, $v_1 = \sum_{i=1}^k \alpha_i c_i$, and $v_2 = \sum_{i=1}^k \beta_i c_i$. Then we have
\[
\sum_{i=1}^k \alpha_i \overline{f}(c_i) = \overline{f}\left(\sum_{i=1}^k \alpha_i c_i\right) = \overline{f}(v_1) = \overline{f}(v_2) = \overline{f}\left(\sum_{i=1}^k \beta_i c_i\right) = \sum_{i=1}^k \beta_i \overline{f}(c_i).
\]
Thus, $\{\overline{f}(c_1), \dots, \overline{f}(c_k)\}$ is affinely-dependent as desired.

\end{proof}

When a column is added that fixes a consistency defect, the convex algebra generated by the old table is a quotient of the convex algebra generated by the new table, where the two are not isomorphic. This means \Cref{thm:progress} applies and guarantees the new table generates a space with strictly higher dimension. Correctness of the algorithm follows trivially from termination and the characterization of the images of $\D(X)$ and $D(\bar X)$ under $\sem -$ in~\Cref{sec:lfp}.

\begin{example}[Reduced automaton Example]\label{ex:extra}
We show how the automaton from \Cref{ex:redvsminimal} (which we recall on the left below) can be reduced to an automaton with 4 states. In the middle is the consistent table corresponding to the automaton and how the rows in the table can be depicted to recover the convex combinations needed to obtain the reduced automaton on the right.

\medskip 
\noindent  \begin{tikzpicture}[baseline=(current bounding box.center),node distance = 1.75cm, arrows=->,font=\scriptsize]
    \tikzset{every state/.style={minimum size=0pt}}
        \node[state] (q1) {$q_1$};
        \node[state, right of=q1, accepting below, accepting text = {$1/2$}] (q2) {$q_2$};
        \node[state, below of=q1, accepting below, accepting text = {$1/2$}] (q3) {$q_3$};
        \node[state, right of=q3] (q4) {$q_4$};
        \node[state, below of=q3, accepting below, accepting text = {$1$}] (q5) {$q_5$};
        \node[state, right of=q5, accepting below, accepting text = {$1/4$}] (q6) {$q_6$};
        \draw (q1) edge[above] node[above]{$a,1$} (q2)
              (q2) edge[loop right] node[above]{$a,1$} (q2)
              (q3) edge[below] node[above]{$a,1$} (q4)
              (q4) edge[loop right] node[above]{$a,1$} (q4)
              (q5) edge[below] node[above]{$a,1$}(q6)
              (q6) edge[loop right] node[above]{$a,1$} (q6);
    \end{tikzpicture}
    \hspace{.7cm}
    \begin{tabular}{ c | c c }
        & $\eword$ & $a$ \\
        \hline
        $q_1$ & $0$ & $1/2$ \\
        $q_2$ & $1/2$ & $1/2$ \\
        $q_3$ & $1/2$ & $0$ \\
        $q_4$ & $0$ & $0$ \\
        $q_5$ & $1$ & $1/2$ \\
        $q_6$ & $1/4$ & $1/4$
    \end{tabular}
    \hspace{.7cm}
    \begin{tikzpicture}[baseline=(current bounding box.center), scale=1.75,font=\scriptsize]
        \coordinate (q1) at (0,0.5);
        \coordinate (q2) at (0.5,0.5);
        \coordinate (q3) at (0.5,0);
        \coordinate (q4) at (0,0);
        \coordinate (q5) at (1,0.5);
        \coordinate (q6) at (0.25,0.25);
        
        \draw[fill=black] (q1) circle (0.15em)
        node[above left] {$q_1$};
        \draw[fill=black] (q2) circle (0.15em)
        node[above right] {$q_2$};
        \draw[fill=black] (q3) circle (0.15em)
        node[below right] {$q_3$};
        \draw[fill=black] (q4) circle (0.15em)
        node[below left] {$q_4$};
        \draw[fill=black] (q5) circle (0.15em)
        node[below left] {$q_5$};
        \draw[fill=black] (q6) circle (0.15em)
        node[below left] {$q_6$};
        
        \draw[->] (0,0) -- (1.5, 0) node[right] {$\eword$};
        \draw[->] (0,0) -- (0, 1.5) node[above] {$a$};
        \draw[-] (0,1) -- (1,1);
        \draw[-] (1,1) -- (1,0);
    \end{tikzpicture}
    \begin{tikzpicture}[baseline=(current bounding box.center), node distance = 1.75cm, arrows=->,font=\scriptsize]
    \tikzset{every state/.style={minimum size=0pt}}
        \node[state] (q1) {$q_1$};
        \node[state, right of=q1, accepting below, accepting text = {$1/2$}] (q3) {$q_3$};
        \node[state, below of=q3] (q4) {$q_4$};
        \node[state, below of=q1, accepting below, accepting text = {$1$}] (q5) {$q_5$};
        \draw (q1) edge[above] node{$a,1/2$} (q3)
              (q1) edge[loop above] node{$a,1/2$} (q1)
              (q3) edge[right, bend left] node{$a,1$} (q4)
              (q4) edge[loop left] node{} (q4)
               (q5) edge[left] node{$a,1/2$} (q1)
              (q5) edge[left] node{$a,1/2$} (q3);
    \end{tikzpicture}
\end{example}

\end{document}